\let\csname equation*\endcsname\relax
\let\csname endequation*\endcsname\relax
\newtheorem{proposition}{Proposition}
\newtheorem{proposition?}{Proposition?}
\newtheorem{theorem}{Theorem}
\newtheorem{lemma}{Lemma}
\newtheorem{corollary}{Corollary}
\newtheorem{example}{Example}
\newcommand{\hi}{\mathcal{H}} 
\newcommand{\his}{\mathcal{H}_{\mathcal{S}}}
\newcommand{\hir}{\mathcal{H}_{\mathcal{R}}}
\newcommand{\hik}{\mathcal{K}} 
\newcommand{\hs}{\mathcal{H}_{\mathcal{S}}}
\newcommand{\Y}{\yen}
\newcommand{\lh}{\mathcal{L(H)}} 
\newcommand{\lhs}{\mathcal{L}(\mathcal{H}_{\mathcal{S}})} 
\newcommand{\lhr}{\mathcal{L}(\hir)} 
\newcommand{\lk}{\mathcal{L(K)}} 
\newcommand{\ip}[2]{\left\langle\,#1\,{|}\,#2\,\right\rangle} 
\newcommand{\ket}[1]{|#1\rangle} 
\newcommand{\bra}[1]{\langle#1|} 
\newcommand{\no}[1]{\left\|#1\right\|} 
\renewcommand{\tr}[1]{\textrm{tr}\left[#1\right]} 
\newcommand{\id}{\mathbbm{1}} 
\newcommand{\Esf}{\mathsf{E}}
\newcommand{\Sy}{\mathcal{S}}
\newcommand{\R}{\mathcal{R}}
\newcommand{\A}{\mathsf{A}}
\newcommand{\E}{\mathsf{E}}
\newcommand{\F}{\mathsf{F}}
\newcommand{\I}{\mathcal{I}}
\begin{document}

\title[Approximating relational observables by absolute quantities]{Approximating relational observables by absolute quantities: a quantum accuracy-size trade-off}

\author{Takayuki Miyadera}
\address{Department of Nuclear Engineering, Kyoto University, 
Kyoto daigaku-katsura, Nishikyo-ku, Kyoto, Japan 615-8540}
\ead{miyadera@nucleng.kyoto-u.ac.jp}
\author{Leon Loveridge}
\address{Department of Computer Science, University of Oxford, Wolfson Building, Parks Rd, Oxford, UK OX1 3QD}
\ead{leon.loveridge@cs.ox.ac.uk}
 \author{Paul Busch}
 \address{Department of Mathematics, University of York, Heslington, York, UK. YO10 5DD}
 \ead{paul.busch@york.ac.uk} 


\begin{abstract}
The notion that any physical quantity is defined and measured relative to a reference frame is traditionally not explicitly reflected in the theoretical description of physical experiments where, instead, the relevant observables are typically represented as ``absolute'' quantities. However, the emergence of the resource theory of quantum reference frames as a new branch of quantum information science in recent years has highlighted the need to identify the physical conditions under which a quantum system can serve as a good reference. Here we investigate
the conditions under which, in quantum theory, an account in terms of absolute quantities can provide a good approximation of relative quantities. We find that this requires the reference system to be large in a suitable sense.
\end{abstract}
{\small Published in: J. Phys. A: Math. Theor. {\bf 49} 185301 (2016)}\\
{\small DOI: \href{http:dx.doi.org/10.1088/1751-8113/49/18/185301}{10.1088/1751-8113/49/18/185301}} 
\maketitle

\section{Introduction}

Symmetry plays a fundamental role in constructing and understanding physical theories \cite{Wigner,Weyl}. It also constrains the relationship between theoretical terms in a given formalism
and the world those terms are used to describe. In a theory with (say, a gauge)  symmetry, the quantities which may be measured (the {\em observables}) must be invariant with respect to the relevant symmetry transformations \cite{brs,lbm1}. Yet, it appears that in certain circumstances, measurements of symmetric systems can be well described in terms of non-invariant quantities. Typically, the invariant quantities are relative observables of a system plus reference, whereas the non-invariant quantities provide a simplified, ``absolute'' description of the situation in terms of the variables of the system alone. 

Problems of this kind appear naturally in investigations of the universality of quantum mechanics and the interface between quantum and
classical systems. For instance, in \cite{Page, Milburn, Brunetti, Giovannetti} 
it is argued that the notion of time appears from stationary observables 
in a composite system of the system and the clock. Another example, closer to the type of situation we will consider, arises in the theory of superconductivity: though the ``absolute" phase in 
a superconductor does not represent an observable
quantity, the Josephson effect shows that the relative phase between two systems can be measured,
since the relative phase operator is gauge (phase-shift) invariant. If one of these systems
is large in a suitable sense, the fluctuation of the phase can be neglected and the phase can be treated effectively
as a classical quantity. In this case the (statistics of a) relative phase measurement can be well approximated
by the statistics of a non-invariant phase observable of a subsystem. 
The large system is then viewed as a (classical) reference system. 

The possibility of such an approximation is considered in various forms in the literature and is accompanied by various interpretations. In \cite{brs}, a review on quantum reference frames and their use as resources for information processing,
 it is shown that any self-adjoint operator $A$ can be ``relativised" to give an (invariant) observable $\Y(A)$ acting in the Hilbert space of a larger system composed of the original system and a {\em reference system} (henceforth we use the term {\em system-plus-reference}). Then $A$ may approximate $\Y(A)$ well if the reference system has a localisable phase-like quantity \cite{lbm1}. $A$ is viewed as an ``approximator" for the invariant observable $\Y(A)$ in the high reference phase localisation limit. The large reference requirement also appears in the Wigner-Araki-Yanase theorem \cite{wig1, buschtrans, ay1, yan1, mi1, lb1, lb2} as a large spread in the apparatus part of a conserved quantity, and as an asymmetric resource state in \cite{ms1}, where the measurement of a non-invariant operator is viewed as being
simulated by the measurement of the invariant observable of system-plus-reference. The symmetry constraint and reference-system size also arise in work concerning superselection rules \cite{www,as,brs,lbm1}.

However, the problem of recasting relative quantities as ``absolute" ones, along with an operational
analysis of the ensuing accuracy (of the ``absolute" quantity as an approximator for the relative one) versus reference size, has hardly been addressed. The purpose, therefore, of this paper, is to prove that the operational ``distance" between an arbitrary effect (operator) of some quantum system and the restriction of a symmetry-invariant effect of system-plus-reference depends explicitly on the spread of the symmetry generator in the state of the reference system. For good approximation, the reference frame system must be large.  Our approach is quantitative and has a clear operational meaning, applies to unsharp as well as sharp observables, and holds for general continuous symmetry groups. We find, moreover, that the trade-off relations established are rooted in the uncertainty relations for certain incompatible quantities, and that therefore the size-versus-accuracy trade-offs are genuinely quantum constraints.

After reviewing standard background material in Section \ref{sec:mp}, we briefly introduce in Section \ref{sec:res} the notion of ``restriction" 
of operators on the system-plus-reference to  operators on the system, which arises by fixing a state of the reference.
Section \ref{sec:ppo} contains the relativisation map introduced in \cite{lbm1}, accompanied by a discussion of the role of reference system localisation. Section \ref{sec:ga} brings our main result: a trade-off between precision and size. This is given as a quantitative relation for the distance between arbitrary and restricted invariant effects; we show that the distance may be made small only when the reference state is delocalised with respect to the symmetry generator or, where applicable, localised with respect to its conjugate quantity. The origin of the trade-off as arising from quantum incompatibility is discussed, and we conclude in Section \ref{sec:con} with a summary and remarks on future work.

\section{Mathematical preliminaries} \label{sec:mp}
We briefly present some mathematical concepts and notation used throughout the paper.
 Any Hilbert space $\mathcal{H}$ we consider will be finite-dimensional and we write $\dim \hi$ for the dimension of $\mathcal{H}$. The standard inner product is denoted 
 $\ip{\cdot}{\cdot}: \mathcal{H} \times \hi \to \mathbb{C}$ (taken to be linear in the second argument). The algebra of bounded  linear operators $A: \hi \to \hi$ will be denoted by $\lh$, with the \emph{positive} operators ($A\ge 0$)  given as those $A \in \lh$ for which $\langle\varphi|A|\varphi\rangle\ge0$ for all $\varphi\in\hi$ (if $A - B \geq 0$ we write $A \geq B$ or $B \leq A$).  \emph{States} in $\hi$, (equivalently, on $\lh$) are given by the positive operators $\rho$ in $\lh$ for which the trace $\tr{\rho}=1$ (equivalently, positive linear functionals $\omega: \lh \to \mathbb{C}$ with $\omega(\id) = 1$), the convex set of which is written $\mathcal{S(H)}$. The extreme elements of $\mathcal{S(H)}$---the pure states---satisfy $\rho ^2 = \rho$, and are given by rank one projections, or simply as unit vectors in $\hi$. 
 
 An \emph{observable} $\Esf$ is a \emph{normalised, positive operator valued measure} ({\sc pom}) $\Esf:\mathcal{F} \to \lh$ on some measurable space $(\Omega,\mathcal{F})$ (so that $\Esf(X)\ge 0$, $\Esf(\Omega)=\id$, and $\Esf$ satisfies the additivity property of measures).
Any operator $\Esf(X)$ in the range of $\Esf$ is an \emph{effect}, that is, a positive operator $A$ for which $0 \leq A \leq \id$;
we write $\mathcal{E(H)}$ for the (convex) set of all effects in $\hi$. States are then
generalised probability measures on $\mathcal{E}(\hi)$. 
An effect $A$ satisfies $V(A):=\no{A-A ^2}=0$ (where $\no{\cdot}$ denotes  the operator norm) if and only 
if $A$ is a projection, otherwise $V(A)$ is a measure of the \emph{noise}, or \emph{unsharpness}, of the effect $A$ 
\cite{buschunsharp, miuncertainty, polterovich}.
The difference between two effects $A$ and $B$ in a given state $\omega$ can be
quantified by $D_{\omega}(A, B):=|\omega(A)-\omega(B)|$. A (state-independent) distance function $D$ is then obtained by taking the supremum over all states:
\begin{eqnarray*}
D(A,B):=\sup_{\omega}D_{\omega}(A, B) = \sup_{\omega} |\omega(A)-\omega(B)|\, =\Vert A -B\Vert.
\end{eqnarray*}
Since $\omega(A)$ and $\omega(B)$ are probabilities, $D$ has a clear operational meaning \cite{miuncertainty}.

Let $G$ be a group. A system is said to {\em possess $G$ as a symmetry (group)} 
if there is a projective (anti-)unitary representation $\{U(g)\}_{g\in G}$ of $G$ on $\hi$. 
That is, the $U(g)$ are (anti-)unitary operators 
satisfying $U(g) U(g') = \mu(g, g') U(g g')$ with  
coefficients $|\mu(g,g')|=1$ 
The projective representation induces 
a $*$-automorphism $\alpha_g : \lh \to \lh$ by 
$\alpha_g(A)= U(g) A U(g)^*$, which satisfies 
$\alpha_g \circ \alpha_{g'} = \alpha_{g g'}$.  
In this paper, we consider a finite-dimensional 
connected Lie group $G$, in which case each $U(g)$ is automatically unitary.
We denote by $\frak{g}$ the Lie algebra of $G$. 
There exists a neighbourhood $O$ of $e$ (the unit of $G$)
such that every element $g \in O$ is uniquely expressed
 as $g = exp(n )$ with $n \in \frak{g}$. 
Once we fix $n \in \frak{g}$, we can define a line 
 $g(s) = exp(s n)$ with a variable real parameter $s$. 
This satisfies $g(s)g(t) = g(s+t)$ for small enough $|s|$ and $|t|$. 
$U(g)$ can be set so as to satisfy 
$U(g(s))U(g(t))=U(g(s+t))$.
Then Stone's theorem on unitary representations of 
one-parameter Abelian groups applies, to obtain 
$U(e^{s n}) = e^{i s N}$ with some self-adjoint operator $N$ (a generator) 
on the Hilbert space. We say an effect $A$ is $G$-{\em invariant} if 
 $U(g)A U(g)^* = A$ for 
all $g\in G$. This naturally implies $[{N}, A]=0$ 
for each generator ${N}$. Similarly, a state $\rho$ is ($G$-)invariant if
$U(g)^*\rho U(g) = \rho$ for all $g \in G$, equivalently if $[{N},\rho]=0$, or if $\omega = \omega_{\rho}: \lh \to \mathbb{C}$ satisfies $\omega_{\rho}(A) = \omega_{\rho}(U(g)AU(g)^*)$ for all $A \in \lh$ and $g \in G$.

Given Hilbert spaces $\hi$ and $\mathcal{K}$  (as ``output" and ``input space", respectively), 
we consider channels in the Heisenberg picture.
A linear map $\Lambda: \lh \to \lk$ is called 
a {\em channel} if (i) $\Lambda(\id)=\id$ and (ii) $\Lambda \otimes id :
\lh \otimes \mathbb{C}^d \to \lk \otimes \mathbb{C}^d$ is 
positive for all $d$ (that is, $\Lambda$ is {\em completely positive}). In the Schr\"odinger picture 
a state $\rho \in \mathcal{S}(\hik)$ is mapped to 
$\Lambda_* (\rho) \in \mathcal{S}(\hi)$, which is defined 
by $\mbox{tr}\bigl[\rho \Lambda (A)\bigr] = \mbox{tr}\bigl[\Lambda_* (\rho) A\bigr]$
for all $A \in \lh$.
If a group $G$ has  (projective) representations $U_1,U_2$ on 
$\hi$ and $\hik$, respectively, a channel $\Lambda$ is called  
($G$-){\em covariant} if it satisfies 
$\Lambda \bigl(U_1(g)AU_1(g)^*\bigr)=U_2(g)\Lambda (A) U_2(g)^*$ for all $A \in \lh$ and $g \in G$.
$G$-covariant channels map invariant effects to invariant effects.

\section{Restriction map}\label{sec:res}

Let us consider a system $\Sy$ and a reference $\R$ with Hilbert spaces $\his$ and $\hir$, respectively. 
 The Hilbert space of the 
total system (system-plus-reference) 
is $\hi:=\his\otimes \hir$.
As a preliminary observation we note that given a self-adjoint $A \in \lh \equiv \mathcal{L}(\his \otimes \hir)$, and by holding $\phi \in \hir$ (assumed to be a unit vector) fixed, the expression
$\ip{\cdot \otimes
 \phi}{A| \cdot \otimes \phi}$ defines a bounded (real) quadratic form $\Phi : \his \times \his \to \mathbb{C}$
and thus a unique self-adjoint operator $A_{\phi} \in \lhs$. This then induces a
 {\em restriction map} of self-adjoint operators 
in $\lh$ to those in $\lhs$, relative to some unit vector in $\hir$, which we observe is unital and completely positive, and which gives rise to a restriction $\mathcal{E}(\hi) \to \mathcal{E}(\his)$ of effects (and of {\sc pom}s). The state $\phi$ plays the role of a ``reference state" through which one can ``view" effects of $\his$. 
More generally, fixing a state $\omega_{\R} \in \mathcal{S}(\hir)$ we define the restriction map 
$\Gamma_{\omega_{\R}}: \lh \to \mathcal{L}(\hs)$ via 
\begin{equation*}
\Gamma_{\omega_{\R}}(A \otimes B)=A \omega_{\R}(B),
\end{equation*}
extended by linearity to the whole of $\lh$.
For any $\omega_{\R}$, $\Gamma_{\omega_{\R}}$ is unital and completely positive. Of course, $\Gamma_{\omega_{\R}}(A \otimes \id) = A$. 
Conversely, any channel 
$\Gamma: \lh \to \mathcal{L}(\hs)$ satisfying 
$\Gamma( A \otimes \id) = A$ has 
a unique state $\omega_{\R}$ on $\hir$ 
satisfying $\Gamma(A \otimes B) = A \omega_{\R}(B)$.\footnote{To see this, we introduce 
a sesquilinear map
$\lh \times \lh \to \mathcal{L}(\his)$ 
by $\langle X, Y\rangle := \Gamma(X^* Y) -\Gamma(X^*)\Gamma(Y)$. 
It satisfies a Cauchy-Schwarz-type inequalilty: 
$\langle X, Y\rangle \langle Y, X\rangle \leq \Vert \langle Y, Y\rangle\Vert 
\langle X, X\rangle$ as shown in Section \ref{sec:ga} (Lemma \ref{lem:cs}).
Since $\langle A \otimes \id, A \otimes \id\rangle =0$, 
we have 
$\langle A\otimes \id, \id \otimes B
\rangle =0$. It implies 
$\Gamma(A \otimes B) = A \Gamma(\id \otimes B)$ 
and $\Gamma(A \otimes B) = \Gamma(\id \otimes B) A$ 
for all $A$ and $B$. 
Thus $\Gamma(\id \otimes B)$ is proportional to $\id$. 
Since $\Gamma$ is a channel, there exists 
a state $\omega_{\R}$ satisfying $\Gamma(\id \otimes B) 
= \omega_{\R}(B)$.} Physically, $\Gamma_{\omega_{\R}}$ may be understood
as providing a reduced description in terms of $\Sy$ alone, contingent upon the state
of the reference being $\omega_{\R}$.
The Schr\"odinger picture is 
helpful to see the physical meaning of the restriction map. 
It maps a state $\omega$ on the system to 
$\omega\circ \Gamma
= \omega \otimes \omega_{\R}$. 
Thus $\Gamma$ describes the usual ``tracing out" 
operation for product states. 
Note that the restriction map is not $G$-covariant in general for 
non-invariant states $\omega_\R$.

We now state our problem. Let $G$ (a connected Lie group) be a symmetry 
of the system and the reference; that is, there exist 
projective unitary representations $U_{\Sy}$ and $U_{\R}$ on 
$\his$ and $\hir$ respectively, with the tensor product representation written
$U_{\Sy}\otimes U_{\R}$ (i.e., $(U_{\Sy}\otimes U_{\R})(g) = U_{\Sy}(g)\otimes U_{\R}(g)$ for each $g \in G$).
For each (possibly non-invariant) $A \in \mathcal{E}(\his)$, 
how close can $\Gamma(E)$ be to $A$ by choosing a
globally invariant $E \in \mathcal{E}(\hi)$ (that is, an $E$ for which 
$(U_{\Sy}(g) \otimes U_{\R}(g)) E (U_{\Sy}(g) \otimes U_{\R}(g))^* 
= E$ for all $g\in G$) and restriction map $\Gamma$? 

We begin by considering the special case of invariant $E$ obtained by relativisation, before
addressing the general case.

\section{Relativisation map}\label{sec:ppo}
In \cite{lbm1} it is shown that {\sc pom}s, and therefore self-adjoint operators and effects, 
can be ``relativised" to give associated invariant (with respect to some chosen groups) {\sc pom}s, self-adjoint operators and effects in a new Hilbert space. For instance, position relativises to relative position, angle to relative angle, and phase (as a {\sc pom}) to relative phase. We therefore view the relativisation procedure as giving rise to relative observables and effects even when defined on arbitrary {\sc poms} (or effects). 
Here we study relativisation for the case (of representations of) $G=U(1)$ (otherwise called the \emph{phase group}) to give general quantitative 
tradeoff relations between (in)accuracy and reference system localisation. 

Specifically, let $N_{\Sy}$ and $N_{\R}$ denote number operators acting in $\his$ and $\hir$. 
Thus, for example, $N_{\Sy}$ is a self-adjoint operator on $\his$ with orthonormal basis of eigenvectors
$\{ \ket{n} \}_{n \in I} \subset \his$, $I = \{0,...,N_0 -1 \}$, so that $N_{\Sy} = \sum_n n\ket{n}\bra{n} \equiv \sum_n n P_n$.
We define  $N:= N_{\Sy}+N_{\R}$ on $\hi=\his\otimes\hir$, and denote the  associated  unitary 
groups by $U_{\Sy}(\theta) = e^{i N_{\Sy} \theta}$, $U_{\R}(\theta)=e^{i N_{\R} \theta}$ and 
$U(\theta)=e^{i N \theta} = U_{\Sy}(\theta)\otimes U_{\R}(\theta)$; $\theta \in [-\pi, \pi)$.

Stipulating that only invariant, or \emph{relative} effects are observable \cite{lbm1}, we may then only measure those
effects $E \in \mathcal{E}(\hi)$ for which 
$U(\theta)E U(\theta)^* = E$ or, equivalently, $[E,N]=0$. We may construct relative effects in $\mathcal{E}(\hi) =\mathcal{E}( \his \otimes \hir)$ out of arbitrary effects in $\mathcal{E}(\his)$; first we introduce
the linear mapping (an adapted and generalised form of that given in \cite{brs}) $\Y: \lhs \to \lh$ (see below for the full definition):
\begin{equation*}\label{eq:yen}
\Y(A) = \int_{-\pi}^{\pi} U_{\Sy}(\theta) A U_{\Sy}(\theta)^* \otimes \F (d \theta).
\end{equation*}
Here $\F$ is a covariant phase {\sc pom} on the reference (i.e., acting in $\hir$), 
defined as any  {\sc pom} on (the Borel algebra $\mathcal{B}\bigl([-\pi, \pi)\bigr)$ of) $[-\pi,\pi)$
for which
\begin{equation*} \label{eq:cph}
U_{\R}(\theta)\F(X)U_{\R}(\theta)^* = \F (X \dotplus \theta),
\end{equation*}
where $\dotplus$ denotes addition modulo $2 \pi$.

Here we give the definition of $\Y(A)$ and investigate its 
properties. 
For a covariant phase {\sc pom} $\F$   
there exists \cite{Davies} a uniquely determined positive operator $T \geq 0$
satisfying 
for all 
measurable subsets $X$, 
\begin{eqnarray*}
\F(X) = \frac{1}{2\pi} \int_X d\theta U_{\R}(\theta) T U_{\R}(\theta)^*, 
\end{eqnarray*}
with $\frac{1}{2\pi}
\int^{\pi}_{-\pi} d\theta\  U_{\R}(\theta) T U_{\R}(\theta)^* = \id$.
That is, for each state $\omega_{\R}$, there exists a positive 
smooth function 
$f_{\omega_{\R}}(\theta):= \frac{1}{2\pi} \omega_{\R}( 
U_{\R}(\theta) T U_{\R}(\theta)^*)$ satisfying 
$\omega_{\R}(\F(X)) = 
\int_X d\theta \ f_{\omega_{\R}}(\theta)$. 
We denote the 
absolutely continuous measure $\omega_\R\circ \F$ by $\mu_{\omega_\R}^\F$, so that 
$\mu_{\omega_{\R}}^{\F}(d\theta) = f_{\omega_{\R}}(\theta) d\theta$.   
$\Y(A)$ is defined by 
\begin{eqnarray*}
\Y(A) = \frac{1}{2\pi} 
\int^{\pi}_{-\pi}d\theta U_{\Sy}(\theta) A U_{\Sy}(\theta)^*
\otimes U_{\R}(\theta) T U_{\R}(\theta)^*. 
\end{eqnarray*}
It is easy to see that $\Y(A)$ is $G$-invariant
and becomes an effect for any effect $A$.
Again, the Schr\"odinger picture is helpful to see 
the physical meaning of $\Y(A)$. 
In the Schr\"odinger picture, the state must be invariant 
while arbitrary effects are allowed. 
In this picture 
the simplest (thus product) relative effect has the form $A \otimes T$. 
Switching to the Heisenberg picture, we obtain 
$\Y(A)$. 

$\Gamma_{\omega_{\R}}(\Y(A))$ is given by
\begin{eqnarray*}
\Gamma_{\omega_{\R}}(\Y(A))= 
\int \mu^{\F}_{\omega_{\R}}(d\theta) \ U_{\Sy}(\theta) A U_{\Sy}(\theta)^*.
\end{eqnarray*}

Quantifying the distance between an effect $A \in \mathcal{E}(\his)$ and a restriction of a 
relativised version, {\em viz.} $\Gamma_{\omega_{\R}} (\Y (A))$, amounts to estimating the following quantity:
\begin{equation*}
D\left(A,\Gamma_{\omega_{\R}}\bigl(\Y(A)\bigr)\right) = \biggl \Vert \int_{-\pi}^\pi
 \mu_{\omega_{\R}}^{\F}(d \theta) 
 \bigl(e^{i\theta N_{\Sy}}A e^{-i \theta N_{\Sy}} -A\bigr)\biggr \Vert.
\end{equation*}

For $0 \leq \epsilon < 1$, the {\em overall width} at confidence level $1-\epsilon$ of the measure $\mu_{\omega_{\R}}^{\F}$ 
is defined by  
\begin{equation}
W_{\epsilon}(\mu_{\omega_{\R}}^{\F}):= \inf \left\{w \big | \exists  \theta:\,\mu_{\omega_{\R}}^{\F}\bigl( \mathcal{I}(\theta,w)\bigr)\geq 1-\epsilon\right\},
\end{equation}
where $\mathcal{I}(\theta,w)$ denotes the closed
interval of width $w\le2\pi$ centred at $\theta$ in $[-\pi,\pi)$, understood as a circle.
It can be shown (see, {\em e.g.}, \cite[Chapter 12]{QMMT}) that 
the above infimum is actually a minimum. Hence,
due to the absolute continuity of $\F$, there is a $\theta_0$ such that 
$\mu_{\omega_\R}^{\F}\left(
\I(\theta_0, W_{\epsilon}(\mu_{\omega_{\R}}^{\F})
\right)=1-\epsilon$. 
We call $\I(\theta_0, W_{\epsilon}(\mu_{\omega_{\R}}^{\F}))$ 
an $\epsilon$-support of the measure $\mu_{\omega_{\R}}^{\F}$ (noting that it need not be unique).
In addition, we define another quantity $W^0_{\epsilon}(\mu_{\omega_\R}^\F)$---the overall width around $0$---by 
\begin{eqnarray*}
W^0_{\epsilon}(\mu_{\omega_\R}^\F)
:=  \inf \left\{w \big |\ \mu_{\omega_{\R}}^{\F}\bigl( \mathcal{I}(0,w)\bigr)\geq 1-\epsilon\right\}. 
\end{eqnarray*}
Of course, $W^0_{\epsilon}(\mu_{\omega_\R}^\F) 
\geq W_{\epsilon}(\mu_{\omega_\R}^\F)$ holds. 
Note that this $\inf$ can also be replaced by $\min$. 
That is, we have 
\begin{equation}\label{eq:eps-width}
\mu_{\omega_{\R}}^{\F}\left(\mathcal{I}(0, W^0_{\epsilon}(\mu_{\omega_\R}^\F)) \right) =1-\epsilon.
\end{equation}

\begin{proposition}
Let $\Y$ be a relativisation map and $\Gamma_{\omega_{\R}}$ a restriction map.
For an arbitrary effect $A$ and $0\leq \epsilon <1$, it holds that
\begin{eqnarray*}
D\bigl(A, \Gamma_{\omega_{\R}}(\Y (A))\bigr)
\leq \bigl\Vert [N_{\Sy}, A]\bigr\Vert  
\left( \tfrac12{W^0_{\epsilon}(\mu_{\omega_\R}^{\F})}(1-\epsilon) 
+ \pi \epsilon \right). 
\end{eqnarray*}
\end{proposition}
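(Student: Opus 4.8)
The plan is to work directly from the exact expression for the distance derived above,
\[
D\left(A,\Gamma_{\omega_{\R}}\bigl(\Y(A)\bigr)\right) = \biggl\Vert \int_{-\pi}^\pi \mu_{\omega_{\R}}^{\F}(d\theta)\,\bigl(e^{i\theta N_{\Sy}}A e^{-i\theta N_{\Sy}}-A\bigr)\biggr\Vert ,
\]
and to estimate the integrand pointwise. Setting $A_\theta := e^{i\theta N_{\Sy}}Ae^{-i\theta N_{\Sy}}$, the fundamental theorem of calculus gives $A_\theta - A = \int_0^\theta e^{isN_{\Sy}}\, i[N_{\Sy},A]\, e^{-isN_{\Sy}}\,ds$, so that $\Vert A_\theta - A\Vert \le |\theta|\,\Vert[N_{\Sy},A]\Vert$, using that conjugation by a unitary preserves the operator norm. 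Combining this with the triangle inequality for operator-valued integrals, $\Vert\int \mu(d\theta)\,X(\theta)\Vert \le \int \mu(d\theta)\,\Vert X(\theta)\Vert$, yields
\[
D\left(A,\Gamma_{\omega_{\R}}\bigl(\Y(A)\bigr)\right) \le \Vert[N_{\Sy},A]\Vert \int_{-\pi}^{\pi}|\theta|\,\mu_{\omega_{\R}}^{\F}(d\theta).
\]

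Next I would split the integral over $[-\pi,\pi)$ (viewed as a circle) into its restriction to the interval $\mathcal{I}(0,W)$, where $W := W^0_{\epsilon}(\mu_{\omega_\R}^{\F})$, and its restriction to the complement. On $\mathcal{I}(0,W) = [-W/2,W/2]$ one has $|\theta|\le W/2$, and by \eqref{eq:eps-width} the $\mu_{\omega_\R}^{\F}$-measure of this interval is exactly $1-\epsilon$; hence this part contributes at most $\tfrac12 W(1-\epsilon)$. On the complement $|\theta|\le\pi$ and the measure is exactly $\epsilon$, so this part contributes at most $\pi\epsilon$. Adding the two estimates and multiplying by $\Vert[N_{\Sy},A]\Vert$ gives exactly
\[
D\bigl(A, \Gamma_{\omega_{\R}}(\Y(A))\bigr) \le \Vert[N_{\Sy},A]\Vert\left(\tfrac12 W^0_{\epsilon}(\mu_{\omega_\R}^{\F})(1-\epsilon) + \pi\epsilon\right).
\]

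I do not expect a genuine obstacle: the argument is just the elementary derivative bound for $\theta\mapsto A_\theta$, the triangle inequality for integrals against the absolutely continuous probability measure $\mu_{\omega_\R}^{\F}$, and the defining property \eqref{eq:eps-width} of the overall width around $0$. The only two points requiring a word of justification are that $\Y(A)$, and hence $\Gamma_{\omega_\R}(\Y(A))$, may indeed be differentiated under the integral sign --- automatic here because everything is finite-dimensional and the integrands are smooth in $\theta$ --- and that, thanks to the absolute continuity of $\F$, the measure of $\mathcal{I}(0,W)$ is attained exactly at the value $1-\epsilon$ rather than merely bounded below, so that no slack is lost when splitting the domain.
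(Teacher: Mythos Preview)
Your proposal is correct and follows essentially the same route as the paper: the derivative bound $\Vert A_\theta - A\Vert \le |\theta|\,\Vert[N_{\Sy},A]\Vert$ via the fundamental theorem of calculus, the triangle inequality for the operator-valued integral, and the split of $\int|\theta|\,\mu_{\omega_\R}^{\F}(d\theta)$ at $\mathcal{I}(0,W^0_\epsilon)$ using \eqref{eq:eps-width}. Your added remarks on differentiation under the integral sign and on the exact attainment of $1-\epsilon$ are fine glosses on steps the paper leaves implicit.
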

\begin{proof}

Let $\theta \in [-\pi, \pi)$. 
\begin{eqnarray*}
\frac{d}{d\theta}
e^{i N_\Sy \theta} A e^{-iN_\Sy \theta} 
= e^{i N_\Sy\theta} i [N_\Sy, A] e^{-i N_\Sy\theta} 
\end{eqnarray*}
is used to obtain
\begin{eqnarray*}
e^{i N_\Sy\theta} A e^{-iN_\Sy \theta} - A 
= \int^{\theta}_0 d\theta' e^{i N_\Sy \theta'} 
i [N_\Sy, A] e^{- i N_\Sy \theta'}. 
\end{eqnarray*}
Taking the norm yields
\begin{eqnarray*}
\bigl\Vert e^{i N_\Sy \theta} A e^{- i N_\Sy \theta}-A \bigr\Vert 
\leq |\theta| \bigl\Vert [N_\Sy, A] \bigr\Vert. 
\end{eqnarray*}

Thus,
\begin{align*}
D\bigl(A, \Gamma_{\omega_{\R}}(\Y (A))\bigr)
&\leq
\bigl\Vert [N_{\Sy}, A]\bigr\Vert 
\int _{-\pi}^\pi \mu_{\omega_\R}^{\F}(d\theta) |\theta|
\\
&=
\bigl\Vert [N_{\Sy}, A]\bigr\Vert
\left(
\int^{W^{0}_{\epsilon}(\mu^{\F}_{\omega_{\R}})/2}_{-W^0_{\epsilon}(\mu^{\F}_{\omega_{\R}})/2}
\mu^{\F}_{\omega_\R}(d\theta)|\theta| \right.\\
&\left. \hspace{2cm}
+ 
\int^{\pi}_{W^0_{\epsilon}(\mu^{\F}_{\omega_{\R}})/2}
\mu^{\F}_{\omega_\R}(d\theta)|\theta| 
+\int^{-W^0_{\epsilon}(\mu^{\F}_{\omega_{\R}})/2}_{-\pi}
\mu^{\F}_{\omega_\R}(d\theta) |\theta|
\right)
\\
&\leq \bigl\Vert [N_{\Sy}, A]\bigr\Vert 
\left( \tfrac12{W^0_{\epsilon}(\mu_{\omega_\R}^{\F})}
(1-\epsilon) + \pi \epsilon \right), 
\end{align*}
where we used Eq.~\eqref{eq:eps-width}. 
\end{proof}
Therefore, we may conclude that for $\omega_\R$ with strong localisation 
around $\theta =0$, the discrepancy between 
$A$ and $\Gamma_{\omega_{\R}} (\Y (A))$ is small. 

An $\omega_{\R}$ which has small overall width but large 
overall width around $0$ does not give good approximation of $A$ by $\Y(A)$.
However, we can construct a deformed quantity $\Y_{\theta_0}(A)$ 
for $\theta_0 \in [-\pi, \pi)$ as
\begin{eqnarray*}
\Y_{\theta_0}(A) 
= (\id \otimes U_{\R}(\theta_0)) \Y(A) (\id \otimes U_{\R}(\theta_0)^*),  
\end{eqnarray*}
which is $G$-invariant. 
Recall that for $0\leq \epsilon <1$, each $\omega_{\R}$ has 
$\theta_0$ such that 
$\mu^{\F}_{\omega_\R}(\I(\theta_0, W_{\epsilon}(\mu^{\F}_{\omega_{\R}})))= 1-\epsilon$. 
For such a $\theta_0$, we have 
\begin{eqnarray*}
D\bigl(A, \Gamma_{\omega_{\R}}(\Y_{\theta_0} (A))\bigr)
\leq \bigl\Vert [N_{\Sy}, A]\bigr\Vert  
\left( \tfrac12{W_{\epsilon}(\mu_{\omega_\R}^{\F})}(1-\epsilon) 
+ \pi \epsilon \right).
\end{eqnarray*}
\par
In the following we show that the high localisation 
of $\omega_{\R}$ is necessary to have $A$ well approximated 
 by $\Gamma_{\omega_{\R}}(\Y (A))$. 
In this illustration we assume that $N_{\Sy}$ has eigenstates $\ket{0}$ and $\ket{1}$ with respective eigenvalues 
$0$ and $1$.
We consider a particular effect $A$ 
defined by $A= \frac{1}{2}\bigl(
|1\rangle\langle 1| + |0\rangle 
\langle 0| + 
|0\rangle \langle 1 | +|1\rangle \langle 0|\bigr)$. 
The effect $A$ does not commute with $N_{\Sy}$. 
We show that the magnitude of difference between $A$ and 
$\Gamma_{\omega_{\R}}(\Y (A))$ is related to the localisation property indeed. 
\begin{proposition}\label{prop2}
For $A= \frac{1}{2}\bigl(|1\rangle \langle 1| 
+ |0\rangle \langle 0| +|0\rangle \langle 1| + |1\rangle \langle 0|\bigr)$, it holds that 
\begin{eqnarray*}
D\bigl(A, \Gamma_{\omega_{\R}}( \Y(A))\bigr)
\geq \frac{\epsilon}{2}\Bigl( 1- \cos \left (\tfrac12{W^0_{\epsilon}(\mu_{\omega_\R}^{\F})}\right) \Bigr)  
\end{eqnarray*} 
\end{proposition}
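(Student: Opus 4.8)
The plan is to evaluate $\Gamma_{\omega_{\R}}(\Y(A))$ explicitly for this particular $A$, which reduces the distance $D$ to the modulus of a single complex number, and then to extract the stated bound from the overall-width data using monotonicity of the cosine. Throughout write $B := \Gamma_{\omega_{\R}}(\Y(A))$.

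Since $A$ is supported on the two-dimensional subspace $\mathcal{K}_0 := \mathrm{span}\{\ket0,\ket1\}$ and $N_{\Sy}\ket0 = 0$, $N_{\Sy}\ket1 = \ket1$, we have $U_{\Sy}(\theta)\ket0 = \ket0$, $U_{\Sy}(\theta)\ket1 = e^{i\theta}\ket1$, so that
\begin{equation*}
U_{\Sy}(\theta) A U_{\Sy}(\theta)^* = \tfrac12\bigl(\kb{0}{0} + \kb{1}{1} + e^{-i\theta}\kb{0}{1} + e^{i\theta}\kb{1}{0}\bigr),
\end{equation*}
which remains supported on $\mathcal{K}_0$. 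Integrating this against $\mu_{\omega_{\R}}^{\F}$ and invoking the formula $B = \int \mu_{\omega_{\R}}^{\F}(d\theta)\,U_{\Sy}(\theta)AU_{\Sy}(\theta)^*$ from Section~\ref{sec:ppo} gives $B = \tfrac12\bigl(\kb{0}{0} + \kb{1}{1} + \bar c\,\kb{0}{1} + c\,\kb{1}{0}\bigr)$, where $c := \int_{-\pi}^{\pi} e^{i\theta}\,\mu_{\omega_{\R}}^{\F}(d\theta)$. Hence $A - B = \tfrac12\bigl((1-\bar c)\kb{0}{1} + (1-c)\kb{1}{0}\bigr)$ is a purely off-diagonal self-adjoint operator on $\mathcal{K}_0$ whose two nonzero entries are complex conjugates, so its operator norm is exactly $\tfrac12|1-c|$. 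Since $D(A,B) = \no{A-B}$, this yields the exact identity $D\bigl(A,\Gamma_{\omega_{\R}}(\Y(A))\bigr) = \tfrac12|1-c|$.

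It remains to bound $|1-c|$ from below. Pass to the real part: $|1-c| \ge \mathrm{Re}(1-c) = \int_{-\pi}^{\pi}(1-\cos\theta)\,\mu_{\omega_{\R}}^{\F}(d\theta)$, the integrand being nonnegative. Now discard the contribution of the interval $\I(0,W)$ around $0$, where $W := W^0_{\epsilon}(\mu_{\omega_{\R}}^{\F})$: by Eq.~\eqref{eq:eps-width} the complement of $\I(0,W) = [-W/2,W/2]$ in $[-\pi,\pi)$ carries $\mu_{\omega_{\R}}^{\F}$-mass exactly $\epsilon$, and on it $W/2 \le |\theta| \le \pi$, so monotonicity of the cosine on $[0,\pi]$ gives $1-\cos\theta \ge 1-\cos(W/2)$ there (note $W/2 \le \pi$ since $W \le 2\pi$). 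Therefore $\int_{-\pi}^{\pi}(1-\cos\theta)\,\mu_{\omega_{\R}}^{\F}(d\theta) \ge \epsilon\bigl(1-\cos(W/2)\bigr)$, and combining with the identity above gives $D\bigl(A,\Gamma_{\omega_{\R}}(\Y(A))\bigr) \ge \tfrac{\epsilon}{2}\bigl(1-\cos(\tfrac12 W^0_{\epsilon}(\mu_{\omega_{\R}}^{\F}))\bigr)$.

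The argument is short once this reduction is spotted; the only step needing a moment's care is the operator-norm evaluation, namely noticing that $A - B$ is off-diagonal in the $\{\ket0,\ket1\}$ basis with mutually conjugate entries, so that $\no{A-B}$ is exactly (not merely at most) $\tfrac12|1-c|$. In fact even the one-sided estimate $\no{A-B} \ge \mathrm{Re}\,\langle\psi|(A-B)|\psi\rangle$ with $\psi = \tfrac{1}{\sqrt2}(\ket0+\ket1)$ already produces $\tfrac12\int(1-\cos\theta)\,\mu_{\omega_{\R}}^{\F}(d\theta)$, which suffices; so there is no real obstacle beyond bookkeeping with the width definition.
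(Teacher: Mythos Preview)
Your proof is correct and essentially identical to the paper's: both compute the distance exactly as $\tfrac12|1-c|$ (your $1-c$ being the paper's $c$), pass to the real part $\int(1-\cos\theta)\,\mu^{\F}_{\omega_\R}(d\theta)$, and then bound this from below by restricting to the complement of $\I(0,W^0_\epsilon)$. The paper phrases the last step as a Chebyshev/Markov-type argument with threshold $t=1-\cos(\tfrac12 W^0_\epsilon)$, which is precisely your direct monotonicity observation that $1-\cos\theta \ge 1-\cos(W^0_\epsilon/2)$ on the mass-$\epsilon$ complement.
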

\begin{proof}
As $e^{i N_{\Sy} \theta} \A e^{-i N_{\Sy} \theta } 
-\A = \frac{1}{2}\bigl[|0\rangle \langle 1 | (e^{-i \theta} -1 ) 
+ ( e^{i \theta } -1) |1\rangle \langle 0 |\bigr] $ holds, 
we have 
\begin{eqnarray*}
D\bigl( A, \Gamma_{\omega_{\R}}(\Y (A))\bigr)
= \tfrac{1}{2}\bigl\Vert |0\rangle \langle 1 | 
\overline{c} + c |1\rangle \langle 0|\bigr\Vert =\tfrac12{|c|}, 
\end{eqnarray*}
where $c:= \int \mu^{\mathsf{F}}_{\omega_\R}(d\theta) ( 1- e^{i \theta})$.
This $|c|$ is estimated for $t\in \mathbb{R}$ as
\begin{align*}
|c|\geq \mbox{Re}\  c 
&= \int \mu_{\omega_\R}^{\F}(d\theta) (1- \cos \theta)\\
&= \int_{1-\cos \theta >t} \mu_{\omega_\R}^{\F}(d\theta)
(1-\cos\theta)
+ \int_{1-\cos \theta \leq t} \mu_{\omega_\R}^{\F}(d\theta)
(1-\cos \theta) \\
&\geq t \cdot \mu_{\omega_\R}^{\F} \bigl(\{\theta | 
1- \cos \theta > t\}\bigr)
\\
&= t \left[ 1- \mu_{\omega_\R}^{\F}\bigl(\{\theta | 1- \cos \theta \leq t\}\bigr)\right]. 
\end{align*}
We put $t = 1- \cos\left( \frac12{W^0_{\epsilon}(\mu^{\F}_{\omega_\R})}\right)$ to 
obtain (using again \eqref{eq:eps-width})
\begin{eqnarray*}
t \left[ 1- \mu_{\omega_\R}^{\F}\bigl(\{\theta | 1- \cos \theta \leq t\}\bigr)\right]
= \left[ 1- \cos\left( 
\tfrac12{W^0_{\epsilon}(\mu_{\omega_\R}^{\F}) }\right)\right]
\epsilon.
\end{eqnarray*}
 \end{proof}

\noindent
Thus we can see that as $W^0_{\epsilon}(\mu_{\omega_\R}^{\F})$ becomes large
so does the discrepancy $D\bigl(A, \Gamma_{\omega_{\R}}( \Y(A))\bigr)$. 

General effects can be treated in essentially the same manner. 
Suppose $A$ does not commute with $N_{\Sy}= \sum_n n P_n$, where each $P_n$ is a projection.
$A$ can be decomposed into two parts, as
$A= \sum_n P_n A P_n + 
\sum_{n \neq m} P_n A P_m$,  and therefore
$e^{i N_{\Sy} \theta} A e^{-i N_{\Sy} \theta } 
-A= \sum_{n > m} ( (e^{i (n-m)} -1) P_n A P_m
+(e^{-i(n-m)} -1) P_m AP_n ) 
$ holds.  
Putting $c_{nm}:= \int \mu^{\F}_{\omega_{\R}}(d\theta) (1-e^{i(n-m)\theta})$, 
we have 
\begin{eqnarray*}
D\bigl( A, \Gamma_{\omega_{\R}}(\Y (A))\bigr)
=  \left\Vert \sum_{n>m} 
c_{nm} P_n A P_m + \overline{c}_{nm} P_m A P_n
\right\Vert.  
\end{eqnarray*}
As there exists a pair $n>m$ with $P_n AP_m \neq 0$ 
due to the noncommutativity between $A$ and $N_{\Sy}$, 
we can choose normalized vectors $|n\rangle $ and $|m\rangle$ 
with $P_n |n\rangle = |n\rangle $ and $P_m|m\rangle = |m\rangle$  
so that $|\phi\rangle:= \frac{1}{\sqrt{2}}(|n\rangle + |m\rangle)$ 
satisfies $\ip{\phi}{\left(c_{nm} P_n A P_m + \overline{c}_{nm} P_m AP_n \right)| \phi}
= \frac{1}{2} \left( c_{nm} \ip{n}{A|m}  
+ \overline{c}_{nm} \ip{m}{A|n} \right) = |c_{nm}|| \ip{n}{A|m}|\neq 0$.
Thus we have a bound, 
\begin{eqnarray*}
D\bigl( A, \Gamma_{\omega_{\R}}(\Y (A))\bigr)
\geq |c_{nm}| |\ip{n}{A|m} |. 
\end{eqnarray*}
$|c_{nm}|$ is bounded as
\begin{eqnarray*}
|c_{nm}| \geq \mbox{Re}\ c_{nm} 
\geq \int^{\frac{\pi}{n-m}}_{- \frac{\pi}{n-m}} \mu^{\F}_{\omega_{\R}}(d\theta) 
(1-\cos (n-m) \theta)
\\
\geq t
\cdot \mu^{\F}_{\omega_{\R}}\left(\{\theta | 
1- \cos (n-m) \theta > t\} \cap [-\frac{\pi}{n-m}, \frac{\pi}{n-m}]\right).
\end{eqnarray*}
For sufficiently large $\epsilon$ satisfying $W_{\epsilon}^0(\mu^{\F}_{\omega_{\R}}) 
\leq \frac{2\pi}{n-m}$, we put $t= 
1- \cos\left( \frac{n-m}{2}W^0_{\epsilon}(\mu^{\F}_{\omega_\R})\right)$ 
to obtain
\begin{eqnarray*}
t
\cdot \mu^{\F}_{\omega_{\R}}\left(\{\theta | 
1- \cos (n-m) \theta > t\} \cap [-\frac{\pi}{n-m}, \frac{\pi}{n-m}]\right)\\
=
\left[1- \cos\left( \frac{n-m}{2}W^0_{\epsilon}(\mu^{\F}_{\omega_\R})\right)
\right] \epsilon.
\end{eqnarray*}
Thus $D\bigl( A, \Gamma_{\omega_{\R}}(\Y (A))\bigr)$ can be 
bounded by a similar inequality in which 
$W^0_{\epsilon}(\mu^{\F}_{\omega_{\R}})$ plays a role. 
In the following we treat only the special effect 
$A= 
\frac{1}{2}(|0\rangle \langle 0| + |1\rangle \langle 1| 
+ |0\rangle \langle 1| +|1\rangle \langle 0|)$ for simplicity. 


In \cite{lbm1} it is shown that a large reference frame allows for good phase localisation around 0. Now we show that to attain good localisation in the sense of small overall width, a large Hilbert space dimension for the reference is needed.
\begin{lemma}\label{lem:owe}
The overall width $W^0_{\epsilon}(\mu_{\omega_\R}^{\F})$ of $\mu_{\omega_\R}^{\F}$ around $0$
satisfies, for any $\epsilon\geq 0$, 
\begin{eqnarray*}
1-\epsilon \leq \frac{\dim \hir}{2\pi} W^0_{\epsilon}(\mu_{\omega_\R}^{\F}).
\end{eqnarray*}
\end{lemma}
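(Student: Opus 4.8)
The plan is to bound the density of $\mu_{\omega_{\R}}^{\F}$ pointwise by a constant proportional to $\dim\hir$ and then integrate this bound over an $\epsilon$-support interval around $0$. As a preliminary step I would determine $\tr{T}$ for the positive operator $T$ associated with the covariant phase {\sc pom} $\F$: taking the trace on both sides of the normalisation identity $\frac{1}{2\pi}\int_{-\pi}^{\pi} d\theta\, U_{\R}(\theta) T U_{\R}(\theta)^* = \id$, and using cyclicity of the trace together with unitarity of $U_{\R}(\theta)$, one obtains $\tr{T} = \dim\hir$.

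Next, recall that $\mu_{\omega_{\R}}^{\F}(d\theta) = f_{\omega_{\R}}(\theta)\,d\theta$ with $f_{\omega_{\R}}(\theta) = \frac{1}{2\pi}\omega_{\R}\bigl(U_{\R}(\theta) T U_{\R}(\theta)^*\bigr)$. Since $\omega_{\R}$ is a state and $U_{\R}(\theta) T U_{\R}(\theta)^* \ge 0$, we have $\omega_{\R}\bigl(U_{\R}(\theta) T U_{\R}(\theta)^*\bigr) \le \Vert U_{\R}(\theta) T U_{\R}(\theta)^*\Vert = \Vert T\Vert$; and for a positive operator on a finite-dimensional space the operator norm (its largest eigenvalue) does not exceed its trace (the sum of the nonnegative eigenvalues), so $\Vert T\Vert \le \tr{T} = \dim\hir$. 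Hence $f_{\omega_{\R}}(\theta) \le \dim\hir/(2\pi)$ for every $\theta$.

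Finally, by \eqref{eq:eps-width} there is an interval $\mathcal{I}(0, W^0_{\epsilon}(\mu_{\omega_{\R}}^{\F}))$ of length $W^0_{\epsilon}(\mu_{\omega_{\R}}^{\F})$ carrying $\mu_{\omega_{\R}}^{\F}$-mass exactly $1-\epsilon$; integrating the pointwise bound on $f_{\omega_{\R}}$ over this interval gives
\[
1-\epsilon = \int_{\mathcal{I}(0, W^0_{\epsilon}(\mu_{\omega_{\R}}^{\F}))} f_{\omega_{\R}}(\theta)\,d\theta \le \frac{\dim\hir}{2\pi}\, W^0_{\epsilon}(\mu_{\omega_{\R}}^{\F}),
\]
which is the asserted inequality. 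I do not anticipate a serious obstacle: the argument needs only the elementary estimate $\Vert T\Vert \le \tr{T}$ for positive operators (legitimate here because all Hilbert spaces are finite-dimensional) and the fact, noted just before \eqref{eq:eps-width}, that the relevant infimum is attained, so that the $\epsilon$-support interval genuinely has length $W^0_{\epsilon}(\mu_{\omega_{\R}}^{\F})$.
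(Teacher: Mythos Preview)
Your proof is correct and follows essentially the same route as the paper's: both bound $\omega_\R(\F(X))$ (or its density) by a trace quantity and evaluate that trace from the covariance/normalisation structure, then apply \eqref{eq:eps-width}. The only cosmetic difference is that the paper works directly with the integrated effect $\F(X)$, observing that number eigenstates satisfy $\langle n|\F(X)|n\rangle = |X|/(2\pi)$ so that $\tr{\F(X)} = \dim\hir\,|X|/(2\pi)$, whereas you bound the density pointwise via the chain $\omega_\R(\cdot)\le\|T\|\le\tr{T}=\dim\hir$.
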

\begin{proof}
For any $X\subset [-\pi, \pi)$ and for any state $\omega_\R$, 
we have $\omega_\R(\F(X)) \leq \mbox{tr}[\F(X)]$, where 
$\mbox{tr}$ denotes the trace in $\hir$.
Now for any eigenstates $|n\rangle $ of $N_{\R}$, 
we have $\ip{n}{\F(X)| n} = |X|/2\pi$ and so 
$\mbox{tr}[\F(X)] = \dim\hir |X|/2\pi$.  We put 
$X = [-W^0_{\epsilon}( \mu^{\F}_{\omega_\R})/2, W^0_{\epsilon}(\mu^{\F}_{\omega_\R})/2]$
to obtain the result (using once more the equality \eqref{eq:eps-width}).
\end{proof}
Note that this estimate is not tight. 
For instance, for $\epsilon=0$ we obtain 
$W^0_0(\mu_{\omega_\R}^{\F}) \geq \frac{2 \pi}{\dim \hir}$. 
However, one can show that for any state $W^0_0(\mu_{\omega_\R}^{\F})= 2\pi$. 
That is, no state can be strictly localised \cite{numphasecomplem}. 
In fact, if we were to assume that a state $\omega_\R$ could 
be strictly localised, there would exist an open set $X \subset [-\pi, \pi)$ 
for which $\omega_\R(\F(X))=0$. 
Defining $f(\theta):= \omega_\R\bigl( e^{i N_{\R} \theta} \F(X) e^{-i N_{\R} \theta}\bigr)$ and noting that the Hilbert space is finite dimensional, 
this function can be analytically continued to the whole 
complex plane. This analytic function, by assumption, satisfies
 $f(\theta)=0$ on some open interval of $\mathbb{R}$. 
This means that $f(\theta) =0$ on the whole plane, which contradicts
$\omega_\R\bigl(\F([-\pi, \pi))\bigr) =1$. 
\begin{proposition}
Let $A =\frac{1}{2}\bigl( |1\rangle \langle 1| + 
|0\rangle \langle 0| + |0\rangle \langle 1 | + |1\rangle \langle 0|\bigr)$. 
For any $\epsilon \geq 0$, 
\begin{eqnarray*}
D\bigl( A, \Gamma_{\omega_{\R}}(\Y (A))\bigr)
\geq \frac{1}{2} \left( 1 - \cos \left ( \pi\frac{1- \epsilon}{\dim \hir}\right)
\right) \epsilon.
\end{eqnarray*} 
\end{proposition}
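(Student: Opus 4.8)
The plan is simply to chain Proposition~\ref{prop2} with Lemma~\ref{lem:owe}; no genuinely new estimate is needed. Proposition~\ref{prop2} already supplies the lower bound $D\bigl(A,\Gamma_{\omega_{\R}}(\Y(A))\bigr)\geq \tfrac{\epsilon}{2}\bigl(1-\cos(\tfrac12 W^0_{\epsilon}(\mu_{\omega_\R}^{\F}))\bigr)$, and Lemma~\ref{lem:owe}, after rearranging $1-\epsilon\le \tfrac{\dim\hir}{2\pi}W^0_{\epsilon}(\mu_{\omega_\R}^{\F})$, gives $\tfrac12 W^0_{\epsilon}(\mu_{\omega_\R}^{\F})\geq \pi(1-\epsilon)/\dim\hir$. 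So the whole proof is the substitution of the second bound into the first, and the claimed inequality $D\bigl(A,\Gamma_{\omega_{\R}}(\Y(A))\bigr)\geq \tfrac12\bigl(1-\cos(\pi(1-\epsilon)/\dim\hir)\bigr)\epsilon$ drops out.

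The one thing that must be checked for this substitution to be valid is that shrinking the argument of the cosine does not increase $1-\cos$. Since $x\mapsto 1-\cos x$ is monotone non-decreasing on $[0,\pi]$, it suffices to verify that both $\tfrac12 W^0_{\epsilon}(\mu_{\omega_\R}^{\F})$ and $\pi(1-\epsilon)/\dim\hir$ lie in $[0,\pi]$. The former does because the interval $\I(0,w)$ entering the definition of the overall width is taken with $w\le 2\pi$, so $W^0_{\epsilon}(\mu_{\omega_\R}^{\F})\le 2\pi$; the latter does because $1-\epsilon\le 1\le \dim\hir$. Hence $1-\cos\bigl(\tfrac12 W^0_{\epsilon}(\mu_{\omega_\R}^{\F})\bigr)\geq 1-\cos\bigl(\pi(1-\epsilon)/\dim\hir\bigr)$, and multiplying through by $\epsilon/2\ge 0$ completes the argument.

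I expect no real obstacle: the substantive work sits in the two cited results — the Chebyshev-type bound on $|c|$ in Proposition~\ref{prop2} and the trace estimate $\mathrm{tr}[\F(X)]=\dim\hir\,|X|/2\pi$ (using covariance of $\F$ and $\ip{n}{\F(X)|n}=|X|/2\pi$ on number eigenstates) in Lemma~\ref{lem:owe}. The only point one must not overlook in assembling them is the range restriction on the cosine argument noted above; everything else is bookkeeping.
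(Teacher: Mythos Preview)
Your proposal is correct and follows exactly the paper's own argument: the paper's proof is the single line ``We use Lemma~\ref{lem:owe} and Proposition~\ref{prop2} to obtain $\cos\left(W^0_{\epsilon}(\mu_{\omega_\R}^{\F})/2\right) \leq \cos\left(\pi(1-\epsilon)/\dim\hir\right)$,'' which is precisely the substitution you describe. Your explicit verification that both cosine arguments lie in $[0,\pi]$ so that the monotonicity step is legitimate is a detail the paper leaves implicit.
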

\begin{proof}
We use Lemma \ref{lem:owe} and Proposition \ref{prop2} to obtain 
$ \cos \left( {W^0_{\epsilon}(\mu_{\omega_\R}^{\F})}/{2}
\right) \leq \cos \left ( \pi(1- \epsilon) /{\dim \hir}\right)$. 
\end{proof}
In particular for $\epsilon = \frac{1}{2}$, we obtain 
\begin{eqnarray*}
D\bigl(A, \Gamma_{\omega_{\R}}(\Y (A))\bigr) \geq \frac{1}{4}
 \left( 1 - \cos \left ( \frac{ \pi}{2 \dim \hir }\right)
\right), 
\end{eqnarray*}
which is a non-trivial bound for $\dim \hir <\infty $.

Let us estimate the overall width around $0$ in terms of 
the standard deviation of the number operator 
$\Delta_{\omega_\R} N_{\R}:= 
(\omega_{\R}(N_{\R}^2) - \omega_{\R}(N_{\R})^2)^{1/2}$. 
We obtain the following bound for $W^0_{\epsilon}(\mu^{\F}_{\omega_\R})$.
\begin{lemma}\label{lemma2}
Assume 
$W^0_{\epsilon}(\mu^{\F}_{\omega_\R})\leq \pi$ and 
$\Delta_{\omega_\R} N_{\R} \cdot 
W^0_{\epsilon}(\mu^{\F}_{\omega_\R})\leq \frac{\pi}{2}$. 
Then $W^0_{\epsilon}(\mu^{\F}_{\omega_\R})$ is bounded as
\begin{eqnarray*}
\cos\left( \Delta_{\omega_{\R}} N_{\R}\cdot  
W^0_{\epsilon}(\mu^{\F}_{\omega_\R})
\right) \leq \sqrt{\epsilon(1-\epsilon)}+ \sqrt{\epsilon}. 
\end{eqnarray*}
\end{lemma}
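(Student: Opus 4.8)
The plan is to connect the overall width around $0$ to the standard deviation $\Delta_{\omega_\R}N_\R$ via a Fourier/characteristic-function estimate for the measure $\mu^\F_{\omega_\R}$. The starting point is the identity $\omega_\R(\F(X)) = \int_X f_{\omega_\R}(\theta)\,d\theta$ together with the explicit form $f_{\omega_\R}(\theta) = \tfrac{1}{2\pi}\omega_\R(U_\R(\theta)TU_\R(\theta)^*)$. First I would compute, for the interval $I := \mathcal{I}(0, W^0_\epsilon(\mu^\F_{\omega_\R}))$ of half-width $w := \tfrac12 W^0_\epsilon(\mu^\F_{\omega_\R})$, the quantity $\int_I \mu^\F_{\omega_\R}(d\theta)\cos\theta$ from below: using $\cos\theta \geq \cos w$ on $I$ (valid since $w \leq \pi/2$, which follows from the second hypothesis once $\Delta_{\omega_\R}N_\R \geq 1$, and needs separate care otherwise — see below) and Eq.~\eqref{eq:eps-width} giving $\mu^\F_{\omega_\R}(I) = 1-\epsilon$, one gets $\int_I \mu^\F_{\omega_\R}(d\theta)\cos\theta \geq (1-\epsilon)\cos w$.

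Next I would bound the same integral from above using the full measure. Write $\int_{[-\pi,\pi)}\mu^\F_{\omega_\R}(d\theta)e^{i\theta}$; its real part is the characteristic-function-type expression whose modulus I want to control by $\Delta_{\omega_\R}N_\R$. The key is that $\int \mu^\F_{\omega_\R}(d\theta) e^{i\theta}$ is, up to the structure of $T$, essentially $\omega_\R$ evaluated on a shift operator in the number basis; more robustly, $1 - \mathrm{Re}\int\mu^\F_{\omega_\R}(d\theta)e^{i\theta} = \int\mu^\F_{\omega_\R}(d\theta)(1-\cos\theta)$, and on the complement $I^c$ one has $\mu^\F_{\omega_\R}(I^c) = \epsilon$, so $\int_{I^c}\mu^\F_{\omega_\R}(d\theta)(1-\cos\theta) \leq 2\epsilon$ trivially, while a Cauchy–Schwarz step gives the sharper $\int_{I^c}\mu^\F_{\omega_\R}(d\theta)|1-\cos\theta| \leq \sqrt{\epsilon}\big(\int\mu^\F_{\omega_\R}(d\theta)(1-\cos\theta)^2\big)^{1/2}$. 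Combining the lower bound $\int_I \cos\theta\,d\mu \geq (1-\epsilon)\cos w$ with $\int_I \cos\theta\,d\mu = \int\cos\theta\,d\mu - \int_{I^c}\cos\theta\,d\mu$ and rearranging should yield $(1-\epsilon)\cos w \leq \int\mu^\F_{\omega_\R}(d\theta)\cos\theta + (\text{error in }\epsilon)$, and the remaining task is to replace $\int\mu^\F_{\omega_\R}(d\theta)\cos\theta$ by something $\leq 1$ and to push the $\epsilon$-errors into the form $\sqrt{\epsilon(1-\epsilon)} + \sqrt{\epsilon}$, dividing through by $1-\epsilon$ at the end (legitimate since $\epsilon < 1$; the $\epsilon = $ endpoint cases are handled by continuity or are vacuous).

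The genuinely new input — where $\Delta_{\omega_\R}N_\R$ enters — is an inequality of the form $\int\mu^\F_{\omega_\R}(d\theta)(1-\cos\theta) \leq C(\Delta_{\omega_\R}N_\R)^2$ or, better, $\mathrm{Re}\int\mu^\F_{\omega_\R}(d\theta)e^{i\theta} \geq 1 - \tfrac12(\Delta_{\omega_\R}N_\R\cdot(\text{something}))^2$. I would derive this by noting that the phase measure is covariant, so $\int\mu^\F_{\omega_\R}(d\theta)e^{i\theta}$ relates to $\omega_\R$ applied to a "lowering-type" operator conjugate to $N_\R$; the bound $|{\langle} e^{i\theta}{\rangle}_{\mu} | \geq$ (decreasing function of $\Delta_{\omega_\R}N_\R$) is then the phase–number uncertainty relation in its quantitative form. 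Concretely, using $1 - \cos\theta \leq \tfrac12\theta^2$ on $I$ (where $|\theta|\leq w \leq \pi$) and the defining property of $W^0_\epsilon$, together with a Chebyshev-type bound $\mu^\F_{\omega_\R}(I^c) = \epsilon$ controlled against the variance, one arrives at the stated mixed bound after taking square roots — the $\sqrt{\epsilon}$ and $\sqrt{\epsilon(1-\epsilon)}$ being exactly the artefacts of a Cauchy–Schwarz split of the integral over $I \cup I^c$ against the $L^2$-norm of $1-\cos\theta$.

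The main obstacle I anticipate is the clean passage from $\Delta_{\omega_\R}N_\R$ to a bound on the phase-measure moments: unlike position–momentum, number and phase do not satisfy a textbook canonical commutation relation, and $T$ (the density of the covariant phase POM) is not canonically fixed, so one must argue that the relevant second moment $\int\mu^\F_{\omega_\R}(d\theta)(1-\cos\theta)$ — equivalently $1 - \mathrm{Re}\,\omega_\R(\text{shift operator})$ — is controlled by $(\Delta_{\omega_\R}N_\R)^2$ uniformly in the choice of $\F$. I expect this to follow from writing the shift operator's expectation in the number basis and applying Cauchy–Schwarz to $\sum_n \sqrt{p_n p_{n+1}}$ versus $\sum_n (n-\bar n)^2 p_n$, but making the constants match the hypotheses $W^0_\epsilon \leq \pi$ and $\Delta_{\omega_\R}N_\R \cdot W^0_\epsilon \leq \pi/2$ — which are precisely what keep $\cos$ monotone and positive on the relevant interval — will require some care in bookkeeping.
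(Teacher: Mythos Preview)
Your approach has a genuine gap and, as outlined, will not reach the stated inequality.

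First, the quantity on the left of the claim is $\cos\bigl(\Delta_{\omega_{\R}}N_{\R}\cdot W^0_{\epsilon}\bigr)$, not $\cos(W^0_{\epsilon}/2)$. Your integral estimate $(1-\epsilon)\cos w \leq \int_I\cos\theta\,d\mu^{\F}_{\omega_{\R}}$ with $w = W^0_{\epsilon}/2$ bounds the wrong cosine, and nothing in the subsequent steps converts $\cos(W^0_{\epsilon}/2)$ into $\cos(\Delta_{\omega_{\R}}N_{\R}\cdot W^0_{\epsilon})$. Second, your proposed ``new input'' --- an inequality of the form $\int\mu^{\F}_{\omega_{\R}}(d\theta)(1-\cos\theta)\leq C(\Delta_{\omega_{\R}}N_{\R})^2$ --- points the wrong way: it would assert that small number fluctuation forces strong phase localisation, whereas the lemma (and number--phase uncertainty generally) says the opposite. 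Your anticipated obstacle concerning the non-canonical $T$ is therefore fatal rather than merely technical; no such moment bound holds uniformly in $\F$, and even the correctly oriented version would not place $\Delta_{\omega_{\R}}N_{\R}$ multiplicatively inside a cosine.

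The paper's argument is entirely different and does not proceed via moments of $\mu^{\F}_{\omega_{\R}}$ at all. It compares the reference state $\rho_{\R}$ with its shift $\rho_{\R,-W^0_{\epsilon}}:=e^{iN_{\R}W^0_{\epsilon}}\rho_{\R}e^{-iN_{\R}W^0_{\epsilon}}$ through the fidelity $F(\rho_{\R},\rho_{\R,-W^0_{\epsilon}})$. An \emph{upper} bound on $F$ comes from the two-outcome {\sc pom} $\{\F(\I(0,W^0_{\epsilon})),\F(\I(0,W^0_{\epsilon})^c)\}$: the unshifted state gives mass $1-\epsilon$ to $\I(0,W^0_{\epsilon})$, while covariance together with the hypothesis $W^0_{\epsilon}\leq\pi$ forces the shifted state to give it mass at most $\epsilon$; the Fuchs--Caves/Jozsa representation of fidelity then yields $F\leq\sqrt{\epsilon(1-\epsilon)}+\sqrt{\epsilon}$. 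A \emph{lower} bound on $F$ comes from the Mandelstam--Tamm inequality, $F(\rho_{\R},\rho_{\R,-\theta})\geq\cos(\Delta_{\omega_{\R}}N_{\R}\cdot\theta)$ for $\Delta_{\omega_{\R}}N_{\R}\cdot|\theta|\leq\pi/2$ --- this is exactly where the second hypothesis is used and where $\Delta_{\omega_{\R}}N_{\R}$ enters multiplicatively inside the cosine. Sandwiching $F$ between these two bounds gives the claim in one line.
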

\begin{proof}
For $\epsilon$ satisfying $\sqrt{\epsilon (1-\epsilon)} + \sqrt{\epsilon}
\geq 1$ the claim is trivial and therefore we assume otherwise. 
In the following $W^0_{\epsilon}(\mu_{\omega_{\R}}^{\F})$ 
is denoted by $W^0_{\epsilon}$ for notational simplicity.  
As stated above, the definition of the overall width around $0$ 
implies  
$\omega_{\R}
(\F\left( \I(0, W^0_{\epsilon})\right))
=\mbox{tr}[\rho_{\R}\F\left( \I(0, W^0_{\epsilon})\right)]=1-\epsilon$, where $\rho_{\R}$ is a density operator representing 
$\omega_{\R}$, i.e., $\omega_{\R}(A)= \mbox{tr}[\rho_{\R}A]$ for all $A \in \lhr$.
Now we define shifted states (as density operators)
of $\rho_{\R}$ by 
$\rho_{\R, \theta}:= e^{-i N_{\R}\theta} \rho_{\R} e^{i N_{\R}\theta}$ 
for $-\pi\le\theta\le\pi$. 
Let us consider the fidelity \cite{Fuchs,Nielsen} between 
$\rho_{\R}$ and $\rho_{\R, - W^0_{\epsilon}}$. 

The fidelity between two states $\rho_0$ and $\rho_1$ is defined by 
$F(\rho_0, \rho_1):= \mbox{tr}\left[\bigl(\rho_0^{1/2}\rho_1 \rho_0^{1/2}\bigr)^{1/2}\right]$, 
which takes values in $[0,1]$. 
It quantifies the closeness of two states such that  
$F(\rho_0, \rho_1)=1$ holds if and only if $\rho_0=\rho_1$. 
We note that the fidelity has another representation 
\cite{Jozsa}, 
\[
F(\rho_0, \rho_1)=\inf_{\E: \mbox{{\sc pom}}} 
\sum_x \mbox{tr}[\rho_0 \E(\{x\})]^{1/2} \mbox{tr}[\rho_1 \E(\{x\})]^{1/2},
\] 
where $\inf$ is taken over all discrete {\sc pom}s.

With  $\I(0, W^0_{\epsilon})^c:= [-\pi, \pi) \setminus \I(0, W^0_{\epsilon})$,
we have $\mbox{tr}\bigl[\rho_{\R} {\F(\I(0, W^0_{\epsilon})^c})\bigr] = \epsilon$. Further, noting that 
the ``shifted" (modulo $2\pi$) set ${\I(0, W^0_{\epsilon})^c}-W^0_{\epsilon}={\I(-W^0_\epsilon, W^0_{\epsilon})^c}$
contains $\I(0, W^0_{\epsilon})\setminus \{ -W^{0}_{\epsilon}/2\}$ (since $W^0_\epsilon\le\pi$), 
we also obtain
$\mbox{tr}\bigl[\rho_{\R, - W^0_{\epsilon}} \F(\I(0, W^0_{\epsilon})^c)\bigr]
=\mbox{tr}\bigl[\rho_{\R}\F(\I( - W^0_{\epsilon}, W^0_{\epsilon})^c)\bigr]
\geq \mbox{tr}\bigl[\rho_{\R} \F(\I(0, W^0_{\epsilon}))\bigr]=1-\epsilon$, 
and thus $\mbox{tr}[\rho_{\R, -W^0_{\epsilon}} \F(\I(0, W^0_{\epsilon}))]
\leq \epsilon$.    
Therefore we have 
\begin{eqnarray*}
F(\rho_{\R}, \rho_{\R, -W^0_{\epsilon}})
&&
\leq \mbox{tr}[\rho_{\R}\F(\I(0, W^{0}_{\epsilon}))]^{1/2}
\mbox{tr}[\rho_{\R, -W^0_{\epsilon}} \F(\I(0, W^{0}_{\epsilon}))]^{1/2}
\\
&&
+ 
\mbox{tr}[\rho_{\R}\F(\I(0, W^{0}_{\epsilon})^c)]^{1/2}
\mbox{tr}[\rho_{\R, -W^0_{\epsilon}} \F(\I(0, W^{0}_{\epsilon})^c)]^{1/2}
\\
&&
\leq \sqrt{\epsilon(1-\epsilon)} + \sqrt{\epsilon}. 
\end{eqnarray*}
The Mandelstam-Tamm uncertainty relation \cite{Mandelstam, BuschTime} 
claims that for an arbitrary normalized vector $|\phi\rangle  \in \hir$ 
and $\theta$ with $\Delta_{\phi  } N_{\R} \cdot |\theta | \leq \pi/2$, 
$|\ip{ \phi}{e^{-i N_{\R} \theta} |\phi} | \geq \cos \left( 
\Delta_{\phi} N_{\R} \cdot \theta \right)$, 
where $\Delta_{\phi} N_{\R}
= \left( \ip{ \phi}{ N_{\R}^2| \phi} 
- \ip{\phi}{ N_{\R}|\phi} ^2 \right)^{1/2}$. 
Since the fidelity between general states $\rho_0$ and $\rho_1 $ is 
also given as $F(\rho_0, \rho_1) = \sup |\langle \phi_0 | \phi_1\rangle |$ 
where $\sup$ is taken over all the possible 
purifications of $\rho_0$ and $\rho_1$, 
the Mandelstam-Tamm inequality 
is applied to purified states to give \cite{MiTime}, 
\begin{eqnarray*}
F(\rho_{\R}, \rho_{\R, -\theta})
\geq \cos
\left(
\Delta_{\omega_{\R}}N_{\R} \cdot \theta 
\right). 
\end{eqnarray*}
Thus we obtain 
\begin{eqnarray*}
\cos\left( \Delta_{\omega_{\R}} N_{\R} \cdot
W^0_{\epsilon}\right) \leq \sqrt{\epsilon(1-\epsilon)}+ \sqrt{\epsilon}. 
\end{eqnarray*}
\end{proof}

\begin{theorem}
Let $A$ be an effect defined by $A=\frac{1}{2}(|0\rangle \langle 0 | 
+ |1\rangle \langle 1| + |0\rangle \langle 1 | + |1\rangle \langle 0|)$.  
For $\omega_{\R}$ satisfying $\Delta_{\omega_{\R}} N_{\R} < \frac{1}{6}$,
\begin{eqnarray*}
D(A, \Gamma_{\omega_{\R}}(\Y(A))) > \frac{1}{32}.
\end{eqnarray*}
For $\omega_{\R}$ satisfying 
$\Delta_{\omega_\R} N_{\R} \geq \frac{1}{6}$, 
it holds 
\begin{eqnarray*}
D(A, \Gamma_{\omega_{\R}}(\Y(A))) 
\geq \frac{1}{32}\left(
1- \cos\left( 
\frac{\pi}{12\Delta_{\omega_{\R}}N_{\R}}\right)
\right). 
\end{eqnarray*}
\end{theorem}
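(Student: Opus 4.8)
The plan is to combine the lower bound from Proposition~\ref{prop2} with the two regimes controlled by Lemma~\ref{lemma2}, choosing the fixed confidence level $\epsilon = \tfrac12$ throughout so that all constants become explicit numbers. Proposition~\ref{prop2} with $\epsilon=\tfrac12$ gives
\begin{eqnarray*}
D\bigl(A, \Gamma_{\omega_{\R}}(\Y(A))\bigr) \geq \tfrac14\Bigl(1-\cos\bigl(\tfrac12 W^0_{1/2}(\mu_{\omega_\R}^{\F})\bigr)\Bigr),
\end{eqnarray*}
so the whole problem reduces to producing a lower bound on $W^0_{1/2}(\mu^{\F}_{\omega_\R})$ in terms of $\Delta_{\omega_\R}N_{\R}$ and feeding it into the increasing function $w\mapsto \tfrac14(1-\cos(w/2))$ on $[0,\pi]$. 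Note that at $\epsilon=\tfrac12$ we have $\sqrt{\epsilon(1-\epsilon)}+\sqrt{\epsilon} = \tfrac12 + \tfrac1{\sqrt2} \approx 1.207 > 1$, so Lemma~\ref{lemma2} as literally stated is vacuous at $\epsilon=\tfrac12$; hence I expect the actual argument to invoke Lemma~\ref{lemma2} (or rather its proof) at a smaller auxiliary value of $\epsilon$ and then relate $W^0_{1/2}$ to $W^0_\epsilon$ via monotonicity of $\epsilon\mapsto W^0_\epsilon$ (smaller $\epsilon$ forces larger width, i.e. $W^0_{1/2}\ge W^0_\epsilon$ would go the wrong way — so in fact one wants $W^0_{1/2}$ bounded below using a value with $\sqrt{\epsilon(1-\epsilon)}+\sqrt\epsilon$ strictly less than $\cos(\Delta N\cdot W^0)$ achievable, and then a careful constant-chasing produces the $1/32$ and the $\pi/12$).

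Concretely, I would split into the two cases stated. \textbf{Case $\Delta_{\omega_\R}N_{\R} \ge \tfrac16$:} Apply Lemma~\ref{lemma2} at some convenient $\epsilon_0$ (I anticipate $\epsilon_0$ chosen so that $\sqrt{\epsilon_0(1-\epsilon_0)}+\sqrt{\epsilon_0}$ is comfortably below $1$, plausibly $\epsilon_0$ around $0.1$), obtaining $\cos(\Delta_{\omega_\R}N_{\R}\cdot W^0_{\epsilon_0}) \le \sqrt{\epsilon_0(1-\epsilon_0)}+\sqrt{\epsilon_0}$, which rearranges to $W^0_{\epsilon_0} \ge \tfrac{1}{\Delta_{\omega_\R}N_{\R}}\arccos(\sqrt{\epsilon_0(1-\epsilon_0)}+\sqrt{\epsilon_0})$ provided the hypothesis $\Delta_{\omega_\R}N_{\R}\cdot W^0_{\epsilon_0}\le \pi/2$ of the lemma holds — which it does precisely because the right-hand side of that $\arccos$ bound is below $\pi/2$ when the argument is $\ge 0$. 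Since $W^0_{\epsilon_0}\ge W^0_{1/2}$ is the wrong direction, I instead use $W^0_{1/2}\ge$ (something); the correct chain is that we want a lower bound on the width at the level $1/2$, and the cleanest route is to observe that the bound of Proposition~\ref{prop2} is itself monotone and to apply it at the level $\epsilon_0$ rather than $1/2$ if the paper's constants demand it. Tracking the precise numerology: with the eventual width bound of the form $W^0 \gtrsim \tfrac{c}{\Delta N}$ one gets $D \ge \tfrac{\epsilon_0}{2}(1-\cos(\tfrac{c}{2\Delta N}))$, and matching $\tfrac{\epsilon_0}{2}$ against $\tfrac1{32}$ forces $\epsilon_0 = \tfrac1{16}$, while matching $\tfrac{c}{2}$ against $\tfrac{\pi}{12}$ forces $c = \tfrac{\pi}{6}$; so the argument is: verify that at $\epsilon_0 = 1/16$ one indeed has $\arccos(\sqrt{\epsilon_0(1-\epsilon_0)}+\sqrt{\epsilon_0}) \ge \pi/6$ (numerically $\sqrt{(1/16)(15/16)}+1/4 = \sqrt{15}/16 + 1/4 \approx 0.242+0.25 = 0.492$, and $\arccos(0.492)\approx 1.057 > \pi/6\approx 0.524$, so yes with room to spare), and also that the side condition $\Delta_{\omega_\R}N_{\R}\cdot W^0_{1/16}\le\pi/2$ needed for Lemma~\ref{lemma2} is compatible — which is exactly why the hypothesis $\Delta_{\omega_\R}N_{\R}\ge 1/6$ does NOT by itself suffice and the case split on $1/6$ is doing work in the other direction.

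\textbf{Case $\Delta_{\omega_\R}N_{\R} < \tfrac16$:} Here one expects the reference to be so poorly localised that the discrepancy is bounded below by the flat constant $\tfrac1{32}$. I would argue that when $\Delta_{\omega_\R}N_{\R}$ is this small, Lemma~\ref{lemma2} (applied at $\epsilon_0=1/16$) forces $W^0_{1/16}(\mu^{\F}_{\omega_\R})$ to be large — specifically the constraint $\Delta_{\omega_\R}N_{\R}\cdot W^0_{1/16}\le\pi/2$ may fail, in which case $W^0_{1/16} > \pi/(2\Delta_{\omega_\R}N_{\R}) > 3\pi > \pi$, but since widths are capped at $2\pi$ and the relevant regime for Proposition~\ref{prop2} needs $W^0/2\le\pi$, one truncates: the worst case consistent with all constraints gives $\tfrac12 W^0_{1/16}$ at least some universal angle bounded away from $0$, yielding $D \ge \tfrac1{32}(1-\cos(\text{that angle}))$, and a direct check that this exceeds $\tfrac1{32}$ — wait, it should be $> \tfrac1{32}$ strictly, so the bounding angle must exceed $\pi/2$, giving $1-\cos > 1$. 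Indeed if $W^0_{1/16}\ge\pi$ then $\tfrac12 W^0_{1/16}\ge\pi/2$ so $1-\cos(\tfrac12 W^0_{1/16})\ge 1$, hence $D\ge\tfrac1{32}\cdot 1 = \tfrac1{32}$, and strictness comes from the fact that $W^0_{1/16}>\pi$ strictly (no state is strictly localised, and the inequality from the failed side-condition is strict). The main obstacle, and the part requiring the most care, is precisely this constant-chasing: verifying that the hypotheses of Lemma~\ref{lemma2} are met (or correctly handling the case where they fail), checking that $\arccos(\sqrt{15}/16 + 1/4) \ge \pi/6$, and confirming that the threshold $1/6$ on $\Delta_{\omega_\R}N_{\R}$ is exactly what makes the side condition $\Delta_{\omega_\R}N_{\R}\cdot W^0 \le \pi/2$ either hold (second sub-case) or fail in a controlled way (first sub-case); none of the individual inequalities is deep, but getting the numbers $\tfrac1{32}$ and $\tfrac{\pi}{12}$ to come out exactly is where the work lies.
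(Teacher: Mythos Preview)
Your approach is essentially the paper's: choose $\epsilon_0 = \tfrac{1}{16}$, feed Proposition~\ref{prop2} the resulting width, and use Lemma~\ref{lemma2} (with the crude bound $\sqrt{\epsilon_0(1-\epsilon_0)}+\sqrt{\epsilon_0}\le 2\sqrt{\epsilon_0}=\tfrac12$) to bound the width from below when its hypotheses are met; your reverse-engineering of the constants $\tfrac1{32}$ and $\tfrac{\pi}{12}$ is spot on.

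The one organisational difference worth flagging: the paper does \emph{not} split first on $\Delta_{\omega_\R}N_\R$ as you do, but instead performs a three-way split on the width itself --- (i) $W^0_{1/16}\le\pi$ and $\Delta_{\omega_\R}N_\R\cdot W^0_{1/16}>\tfrac{\pi}{2}$, (ii) $W^0_{1/16}\le\pi$ and $\Delta_{\omega_\R}N_\R\cdot W^0_{1/16}\le\tfrac{\pi}{2}$, (iii) $W^0_{1/16}>\pi$ --- and then observes that cases (i) and (ii) each \emph{force} $\Delta_{\omega_\R}N_\R\ge\tfrac16$, so that $\Delta_{\omega_\R}N_\R<\tfrac16$ lands one in case (iii) automatically. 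This cleans up precisely the spot where your write-up wobbles: in your ``$\Delta_{\omega_\R}N_\R\ge\tfrac16$'' case you implicitly assume Lemma~\ref{lemma2} applies, but in fact its side condition $\Delta_{\omega_\R}N_\R\cdot W^0_{1/16}\le\tfrac\pi2$ may fail (paper's case (i)) or one may have $W^0_{1/16}>\pi$ (paper's case (iii)); both subcases give \emph{stronger} lower bounds than the theorem claims, so nothing breaks, but you do need to say so. The paper's ordering makes this bookkeeping transparent.
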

\begin{proof}
We put $\epsilon = \frac{1}{16}$. 
There are three possibilities: 
(i) $W^0_{\frac{1}{16}}(\mu^{\F}_{\omega_{\R}})\leq \pi$ 
and $\Delta_{\omega_{\R}} N_{\R} \cdot W^0_{\frac{1}{16}}
(\mu^{\F}_{\omega_{\R}})>\frac{\pi}{2}$, 
(ii) $W^0_{\frac{1}{16}}(\mu^{\F}_{\omega_{\R}})\leq \pi$ 
and $\Delta_{\omega_{\R}} N_{\R} \cdot W^0_{\frac{1}{16}}
(\mu^{\F}_{\omega_{\R}}) \leq \frac{\pi}{2}$,
(iii) $W^0_{\frac{1}{16}}(\mu^{\F}_{\omega_{\R}}) > \pi$. 
(i) implies $\Delta_{\omega_{\R}}N_{\R} > \frac{1}{2}$. 
In this case, Proposition \ref{prop2} gives 
\begin{eqnarray}
D(A, \Gamma_{\omega_{\R}}(\Y(A))) 
\geq \frac{1}{32}\left(
1- \cos\left( 
\frac{\pi}{4\Delta_{\omega_{\R}}N_{\R}}\right)
\right). 
\label{case1}
\end{eqnarray}
Under condition (ii) lemma \ref{lemma2} applies. 
For $\epsilon = \frac{1}{16}$, 
$\sqrt{\epsilon (1-\epsilon)} + \sqrt{\epsilon} 
\leq 2 \sqrt{\epsilon} = \frac{1}{2}$ holds and 
thus 
\begin{eqnarray}
\Delta_{\omega_{\R}} N_{\R} \cdot 
W^0_{\frac{1}{16}} ( \mu^{\F}_{\omega_\R}) \geq \frac{\pi}{6}
\label{case2}
\end{eqnarray}
follows. It implies $\Delta_{\omega_{\R}} N_{\R} \geq \frac{1}{6}$ and 
\begin{eqnarray*}
D(A, \Gamma_{\omega_{\R}}(\Y(A))) 
\geq \frac{1}{32}\left( 1- \cos\left( \frac{\pi}{12\Delta_{\omega_{\R}}N_{\R}}\right)\right). 
\end{eqnarray*}
Combining (i) and (ii), we observe 
that $\Delta_{\omega_{\R}} N_{\R} 
<\frac{1}{6}$ implies $W^0_{\frac{1}{16}}(\mu^{\F}_{\omega_\R}) >\pi$, thus 
(iii). 
In this case, Proposition \ref{prop2} is applied to show that
\begin{eqnarray}
D(A, \Gamma_{\omega_{\R}}(\Y(A)) > \frac{1}{32}. 
\label{case3}
\end{eqnarray}
This completes the proof. 
\end{proof}

To summarise, we showed that 
in this relativisation model the strong localisation (that is, small overall width) 
of the reference state is necessary and sufficient to obtain good approximation. 
Moreover, to achieve the strong localisation we need a large reference, in the sense of large Hilbert space dimension.
This last finding can be naturally interpreted by the uncertainty relation for  
joint localisability \cite{BuschOperational, miJMP}.

There are various reasons for going beyond these results. $\Y$ relies on covariant phases, and does not yield all invariant observables in $\hi$, immediately pointing to the need to understand the general case beyond the particular construction. We now present a model-independent trade-off between the operational measure of distance $D(\Gamma (E),A)$ of an arbitrary effect $A$ from the restriction of an invariant effect $E$, and the size of the reference system. 

\section{General argument}\label{sec:ga}

In this section we provide a general operational trade-off relation. We consider an arbitrary effect $A \in \mathcal{E}(\his)$ and a relative (invariant) effect $E \in \mathcal{E}(\his \otimes \hir)=\mathcal{E}(\hi)$
and try to bound $D(A, \Gamma_{\omega_{\R}}(E))$. Once more there is a 
symmetry (Lie) group $G$
acting (via projective unitary representations) both in $\mathcal{L}(\his)$ and $\mathcal{L}(\hir)$, and we recall that for each element $n$ in the Lie algebra $\frak{g}$ of $G$, there exist corresponding
self-adjoint operators $N_\Sy$ and $N_\R$ 
such that for sufficiently small $|s|$, 
$\alpha^\Sy_{e^{n s}}(A) = e^{i {N_\Sy} s} A e^{-i {N_\Sy} s}$
and $\alpha^\R_{e^{n s}}(A) = e^{i N_\R s} A e^{- i N_\R s}$. 

For notational simplicity we write $\Gamma:{\lh} \to \lhs$ for a channel of the form $\Gamma = \Gamma_{\omega_{\R}}$ for some $\omega_{\R}$.  
The following is our main result.

\begin{theorem}\label{th:tradeoff} Recall that $V(A)=\Vert A-A^2\Vert$, $D(A,B) = \no{A-B}$,
$N_{\Sy}$ and $N_{\R}$ are number operators on $\his$ and $\hir$ respectively, and $\omega_{\R} \in \mathcal{S}(\hir)$. Then the following inequality holds:
\begin{align*}
\bigl\Vert [A, N_\Sy]\bigr\Vert 
\leq 
2 D\bigl(\Gamma(E), A\bigr) \Vert N_\Sy\Vert
+2 \left(\omega_{\R}(N_{\R}^2)-\omega_{\R}(N_{\R})^2\right)^{1/2}
\,\bigl(2D(\Gamma(E),A)+V(A)\bigr)^{1/2}.
\end{align*}
\end{theorem}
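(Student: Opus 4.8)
The plan is to bound the commutator $[A,N_\Sy]$ by introducing the invariant effect $E$ as an intermediary and exploiting its invariance, which converts $[A,N_\Sy]$ into something controlled by $N_\R$ through the restriction map. First I would write $\Gamma(E) = A + (\Gamma(E)-A)$ and note that $\|\Gamma(E)-A\| = D(\Gamma(E),A) =: d$. Since $E$ is $G$-invariant, $[E, N_\Sy \otimes \id + \id \otimes N_\R] = 0$, so $[E, N_\Sy \otimes \id] = -[E, \id \otimes N_\R]$. Applying $\Gamma$ (which is a channel, hence completely positive and unital, and for which $\Gamma(X \otimes \id) = X$) and using the module-type property $\Gamma((N_\Sy\otimes\id)\, Y) $ versus $\Gamma(Y (N_\Sy\otimes\id))$ — here I expect to use the footnoted Cauchy--Schwarz inequality $\langle X,Y\rangle\langle Y,X\rangle \le \|\langle Y,Y\rangle\|\,\langle X,X\rangle$ with the sesquilinear form $\langle X,Y\rangle := \Gamma(X^*Y)-\Gamma(X^*)\Gamma(Y)$ — I would show that $\Gamma((N_\Sy\otimes\id)E)$ and $(N_\Sy)\Gamma(E)$ differ by a term controlled by $d$ and $\|N_\Sy\|$, and similarly on the other side. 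This produces the estimate $\|[N_\Sy,\Gamma(E)] - \Gamma([N_\Sy\otimes\id, E])\| \lesssim d\,\|N_\Sy\|$, and combined with $\|[N_\Sy, \Gamma(E)] - [N_\Sy,A]\| \le 2d\|N_\Sy\|$ we reduce to bounding $\|\Gamma([\id\otimes N_\R, E])\|$.

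The second half of the argument bounds $\|\Gamma([\id\otimes N_\R, E])\|$ in terms of the spread $(\omega_\R(N_\R^2)-\omega_\R(N_\R)^2)^{1/2}$ and the noise $V(A)$. Here I would again invoke the Cauchy--Schwarz inequality for $\Gamma$, writing $[\id\otimes N_\R, E] = (\id\otimes N_\R)E - E(\id\otimes N_\R)$ and using $\langle \id\otimes N_\R, E\rangle$ together with $\langle E, \id\otimes N_\R\rangle$. The key points are: $\|\langle \id\otimes N_\R, \id\otimes N_\R\rangle\| = \|\Gamma(\id\otimes N_\R^2) - \Gamma(\id\otimes N_\R)^2\| = \omega_\R(N_\R^2)-\omega_\R(N_\R)^2$ (the variance appears precisely because $\Gamma$ is the tracing-out map for the reference state), and $\|\langle E,E\rangle\| = \|\Gamma(E^2)-\Gamma(E)^2\|$ should be bounded by $2d + V(A)$ since $E^2 \le E$ gives $\Gamma(E^2)\le \Gamma(E)$, while $\Gamma(E)^2$ is close to $A^2$ and $A - A^2$ has norm $V(A)$. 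Assembling these gives the asserted inequality with the factor $2$ in front and the square root structure $\bigl(2D(\Gamma(E),A)+V(A)\bigr)^{1/2}$.

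The main obstacle I anticipate is the careful handling of the noncommutativity in the two-sided estimates: $\Gamma$ is not a homomorphism, so passing $N_\Sy$ (or $N_\R$) in and out of $\Gamma$ requires the Cauchy--Schwarz/Kadison-type inequality rather than naive multiplicativity, and one must track on which side ($N_\Sy\otimes\id$ multiplying $E$ from the left versus the right) each term lives so that the telescoping $[N_\Sy, A] \approx [N_\Sy, \Gamma(E)] = \Gamma([N_\Sy\otimes\id,E]) = -\Gamma([\id\otimes N_\R, E])$ actually closes up with the claimed constants. A secondary technical point is verifying that $\|\langle E,E\rangle\|\le 2D(\Gamma(E),A)+V(A)$ cleanly: one uses $0 \le \Gamma(E^2) \le \Gamma(E)$, hence $\Gamma(E^2)-\Gamma(E)^2 \le \Gamma(E)-\Gamma(E)^2$, and then estimates $\|\Gamma(E)-\Gamma(E)^2 - (A-A^2)\|$ by $\|\Gamma(E)-A\| + \|\Gamma(E)^2 - A^2\| \le d + 2d = 3d$ — or more tightly, using $\|\Gamma(E)\|,\|A\|\le 1$, by $2d$ — so that $\|\Gamma(E^2)-\Gamma(E)^2\| \le 2d + V(A)$; I would double-check the constant to match the statement exactly. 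Once these two lemma-level estimates are in hand, the theorem follows by the triangle inequality.
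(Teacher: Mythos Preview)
Your plan is correct and matches the paper's proof in its essentials: the triangle inequality isolates the term $2D(\Gamma(E),A)\,\|N_\Sy\|$, and then the Cauchy--Schwarz inequality for the form $\langle X,Y\rangle=\Gamma(X^*Y)-\Gamma(X^*)\Gamma(Y)$ bounds $\|[\Gamma(E),N_\Sy]\|$ by $2\bigl(\omega_\R(N_\R^2)-\omega_\R(N_\R)^2\bigr)^{1/2}\|\Gamma(E^2)-\Gamma(E)^2\|^{1/2}$, with $\|\Gamma(E^2)-\Gamma(E)^2\|\le 2D(\Gamma(E),A)+V(A)$ established exactly as you sketch. One clarification that removes your anticipated obstacle: the module step is \emph{exact}, not approximate, because $\langle N_\Sy\otimes\id,\,N_\Sy\otimes\id\rangle=N_\Sy^2-N_\Sy^2=0$ forces $\Gamma\bigl((N_\Sy\otimes\id)E\bigr)=N_\Sy\,\Gamma(E)$ identically --- the paper phrases this as $\Gamma(N)=N_\Sy+\omega_\R(N_\R)\id$, hence $[\Gamma(E),N_\Sy]=[\Gamma(E),\Gamma(N)]$, and then applies the commutator--Cauchy--Schwarz lemma once to the commuting pair $(E,N)$, which packages your two stages into one and delivers the stated constants directly.
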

This inequality shows that good proximity between a symmetric, relative observable and a non-symmetric absolute quantity requires a large spread in the reference system's number operator.

Before we prove the theorem, we recall some relevant results concerning channels $\Lambda:\mathcal{L}(\hi) \to \mathcal{L}(\hik)$. In the application of the proof, we will specify 
$\Lambda=\Gamma:\mathcal{L}(\his\otimes\hir)\to\mathcal{L}(\his)$.
 
 
Define a sesquilinear mapping $\langle \cdot,\cdot\rangle:\mathcal{L}(\hi)
\times \mathcal{L}(\hi) \to\mathcal{L}(\hik)$ by
\begin{equation*}
\langle A,B\rangle :=\Lambda(A^*B) -\Lambda(A^*)\Lambda(B). 
\end{equation*}
This mapping satisfies $\langle A,A\rangle\geq 0$ and
$\langle A,B\rangle^* =\langle B,A\rangle$ for all $A, B\in\mathcal{L}(\hi)$.
Thus this can be regarded as an ``operator-valued inner product''. 
It satisfies a Cauchy-Schwarz inequality proven by Janssens \cite{Ja06}, which we recapitulate  here as we expect it to be useful for future steps in this line of investigation.
\begin{lemma}\label{lem:cs}
Consider the map $\langle \cdot, \cdot  \rangle$ defined above. 
For any $A,B \in \lh$ , it satisfies 
\begin{equation*}
\langle A,B\rangle \langle B,A\rangle \leq 
\Vert \langle B,B\rangle \Vert \langle A,A\rangle.
\end{equation*}
\end{lemma}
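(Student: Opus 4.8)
The plan is to prove the operator Cauchy--Schwarz inequality $\langle A,B\rangle \langle B,A\rangle \leq \Vert \langle B,B\rangle\Vert\,\langle A,A\rangle$ by the usual trick of forming a suitable nonnegative combination and expanding. First I would dispose of the degenerate case: if $\langle B,B\rangle = 0$, then I need $\langle A,B\rangle\langle B,A\rangle = 0$; this follows because for any vector $\psi$ and any scalar $\lambda$, nonnegativity of $\langle A+\lambda B, A+\lambda B\rangle$ forces the cross terms to vanish when the $\vert\lambda\vert^2$ term is absent (complete positivity of $\Lambda$, applied to the $2\times 2$ amplification, guarantees $\langle X,X\rangle\ge 0$ for all $X$, and in particular $\langle A+\lambda B,A+\lambda B\rangle\ge 0$). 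Taking $\lambda$ real and then imaginary and letting $\vert\lambda\vert\to\infty$ kills $\lambda\langle A,B\rangle+\bar\lambda\langle B,A\rangle$, and a second application with finite $\lambda$ then forces $\langle A,B\rangle=\langle B,A\rangle=0$, whence the claim is trivial.

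For the main case, assume without loss that $\Vert\langle B,B\rangle\Vert = 1$ (rescale $B$), so that $\langle B,B\rangle \leq \id$. The key step is to consider, for a real parameter $t$ and the operator $C := \langle B,A\rangle \in \mathcal{L}(\hik)$, the element $X_t := A - t B C$ of $\mathcal{L}(\hi)$ and expand $\langle X_t, X_t\rangle \geq 0$. This yields
\begin{equation*}
\langle A,A\rangle - t\,\langle A,B\rangle C - t\, C^* \langle B,A\rangle + t^2 C^* \langle B,B\rangle C \geq 0,
\end{equation*}
that is, using $C = \langle B,A\rangle$ and $C^* = \langle A,B\rangle$,
\begin{equation*}
\langle A,A\rangle - 2t\, \langle A,B\rangle\langle B,A\rangle + t^2\, \langle A,B\rangle \langle B,B\rangle \langle B,A\rangle \geq 0.
\end{equation*}
Since $\langle B,B\rangle \leq \id$, the last term is bounded above by $t^2 \langle A,B\rangle\langle B,A\rangle$, so for every $t\in\real$ we get $\langle A,A\rangle \geq (2t - t^2)\,\langle A,B\rangle\langle B,A\rangle$. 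Choosing $t=1$ (which maximises $2t-t^2$) gives exactly $\langle A,A\rangle \geq \langle A,B\rangle\langle B,A\rangle$, and undoing the normalisation restores the factor $\Vert\langle B,B\rangle\Vert$.

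The one point requiring care — and the main (mild) obstacle — is justifying $\langle X,X\rangle \geq 0$ for arbitrary $X\in\mathcal{L}(\hi)$, i.e. that $\Lambda(X^*X) \geq \Lambda(X^*)\Lambda(X)$ as operators in $\mathcal{L}(\hik)$; this is the operator Kadison--Schwarz inequality, which holds because $\Lambda$ is completely positive (in fact $2$-positivity suffices): apply the positive map $\Lambda\otimes \mathrm{id}_2$ to the positive element $\left(\begin{smallmatrix} \id & X \\ 0 & 0\end{smallmatrix}\right)^*\left(\begin{smallmatrix} \id & X \\ 0 & 0\end{smallmatrix}\right) = \left(\begin{smallmatrix} \id & X \\ X^* & X^*X\end{smallmatrix}\right)$ to get that $\left(\begin{smallmatrix} \id & \Lambda(X) \\ \Lambda(X)^* & \Lambda(X^*X)\end{smallmatrix}\right)\geq 0$, and take the Schur complement. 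Everything else is the routine quadratic-form manipulation above, so the write-up is short; I would cite Janssens \cite{Ja06} for the statement and simply record this argument.
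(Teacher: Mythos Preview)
Your strategy is the standard Hilbert $C^*$-module proof of Cauchy--Schwarz, but as written it does not apply here. The channel has the form $\Lambda:\mathcal{L}(\hi)\to\mathcal{L}(\hik)$ with $\hi\neq\hik$ in general (in the application, $\hi=\his\otimes\hir$ and $\hik=\his$), so your element $X_t=A-tBC$ is ill-typed: $B\in\mathcal{L}(\hi)$ while $C=\langle B,A\rangle\in\mathcal{L}(\hik)$, and the product $BC$ is simply undefined. Even if one restricted to the case $\hi=\hik$ so that $BC$ made sense, the sesquilinear form $\langle\cdot,\cdot\rangle$ is \emph{not} right $\mathcal{L}(\hik)$-linear: $\langle A,BC\rangle=\Lambda(A^*BC)-\Lambda(A^*)\Lambda(BC)$ is not $\Lambda(A^*B)C-\Lambda(A^*)\Lambda(B)C=\langle A,B\rangle C$ for a general channel. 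Hence the expansion of $\langle X_t,X_t\rangle$ does not produce the terms you wrote, and the argument collapses at its key step. (Your handling of the degenerate case $\langle B,B\rangle=0$ is fine, since there you only use scalar $\lambda$.)

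The missing ingredient is precisely what the paper supplies: a Stinespring dilation $\Lambda(A)=V^*(A\otimes\id)V$ with isometry $V:\hik\to\hi\otimes\hi'$. Setting $\xi=\sqrt{\id-VV^*}$ one obtains the factorisation $\langle A,B\rangle=W_A^*W_B$ with $W_A:=\xi(A\otimes\id)V:\hik\to\hi\otimes\hi'$. At this point your quadratic trick becomes legitimate, because the operators $W_A$ form a genuine right $\mathcal{L}(\hik)$-module under post-composition, so $W_A-tW_BC$ is well-defined and $\langle\cdot,\cdot\rangle$ is the module inner product. The paper instead finishes directly: from $W_BW_B^*\leq\Vert W_BW_B^*\Vert\,\id$ one gets $\langle A,B\rangle\langle B,A\rangle=W_A^*W_BW_B^*W_A\leq\Vert W_B^*W_B\Vert\,W_A^*W_A=\Vert\langle B,B\rangle\Vert\,\langle A,A\rangle$ in one line, using the $C^*$-identity $\Vert W_BW_B^*\Vert=\Vert W_B^*W_B\Vert$. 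Either route works, but both require the Stinespring factorisation that your proposal omits.
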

\begin{proof}
We consider a (not necessarily minimal) Stinespring representation 
$(\hi',V)$ of the channel $\Lambda$. 
That is, $\hi'$ is a Hilbert space and $V: \hik \to \hi \otimes \hi'$ 
is an isometry such that $\Lambda(A)= V^*(A\otimes \id_{\hi'}) V$ holds.
By introducing the operator 
$\xi:=\sqrt{\id-VV^*}$, 
we can represent the map as 
\begin{equation*}
\langle A,B \rangle=V^*(A^*\otimes \id_{\hi'})\xi^*\xi (B \otimes \id_{\hi'})V. 
\end{equation*} 
We then have 
\begin{equation*}
\langle A, B\rangle \langle B,A\rangle
=V^*(A^*\otimes \id_{\hi'}) \xi^*\xi (B\otimes \id_{\hi'})VV^* 
(B^*\otimes \id_{\hi'}) \xi^* \xi (A\otimes \id_{\hi'})V. 
\end{equation*}
As $\xi (B\otimes \id_{\hi'})VV^* (B^*\otimes \id_{\hi'})\xi^* \geq 0$, 
it follows that 
\begin{align*}
V^*(A^*\otimes \id_{\hi'}) \xi^*\xi (B\otimes \id_{\hi'})
&VV^* (B^*\otimes \id_{\hi'}) \xi^* \xi (A\otimes \id_{\hi'})V 
\\
&\leq \Vert \xi (B\otimes \id_{\hi'})VV^* (B^*\otimes \id_{\hi'})
 \xi^*\Vert V^*(A^*\otimes \id_{\hi'}) \xi^*\xi (A\otimes \id_{\hi'})V. 
\end{align*}
Using the $C^*$ property of the norm, we obtain 
\begin{align*}
\Vert \xi (B\otimes \id_{\hi'}) VV^* (B^*\otimes \id_{\hi'}) \xi^*\Vert 
=\Vert V^*(B^*\otimes \id_{\hi'}) \xi^*\xi (B\otimes \id_{\hi'})V \Vert
=\Vert \langle B, B\rangle\Vert \,,
\end{align*}
which proves the lemma.
\end{proof}

The following lemma is an immediate consequence of the previous result.
\begin{lemma}\label{normCS}
For any $A,B\in \lh$,
\begin{equation*}
\Vert \langle A,B\rangle \Vert^{2}
\leq 
\Vert \langle B,B\rangle \Vert \Vert \langle A,A\rangle \Vert.
\end{equation*}
\end{lemma}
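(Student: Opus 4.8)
The plan is to derive the stated norm inequality directly from the operator Cauchy--Schwarz inequality of Lemma~\ref{lem:cs}, using only elementary properties of the operator norm. The starting point is the operator inequality
\[
\langle A,B\rangle \langle B,A\rangle \leq \Vert \langle B,B\rangle \Vert\, \langle A,A\rangle,
\]
in which both sides are positive operators on $\hik$: the left-hand side has the form $X X^*$ with $X = \langle A,B\rangle$, using the adjoint relation $\langle B,A\rangle = \langle A,B\rangle^*$ established when the operator-valued inner product was introduced, while the right-hand side is the positive operator $\langle A,A\rangle$ scaled by the nonnegative scalar $\Vert\langle B,B\rangle\Vert$.

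First I would take the operator norm of both sides. Since the operator norm is monotone on the cone of positive operators---that is, $0 \leq P \leq Q$ implies $\Vert P\Vert \leq \Vert Q\Vert$---the operator inequality yields
\[
\Vert \langle A,B\rangle \langle B,A\rangle \Vert \leq \bigl\Vert\, \Vert\langle B,B\rangle\Vert\, \langle A,A\rangle\,\bigr\Vert.
\]
Next I would simplify each side. On the right, the scalar factor $\Vert\langle B,B\rangle\Vert$ pulls out of the norm, giving $\Vert\langle B,B\rangle\Vert\,\Vert\langle A,A\rangle\Vert$. On the left, I would invoke the adjoint relation $\langle B,A\rangle = \langle A,B\rangle^*$ together with the $C^*$ identity $\Vert X X^*\Vert = \Vert X\Vert^2$ to obtain $\Vert\langle A,B\rangle\langle B,A\rangle\Vert = \Vert\langle A,B\rangle\langle A,B\rangle^*\Vert = \Vert\langle A,B\rangle\Vert^2$. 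Combining these two evaluations produces precisely the claimed bound $\Vert\langle A,B\rangle\Vert^2 \leq \Vert\langle B,B\rangle\Vert\,\Vert\langle A,A\rangle\Vert$.

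There is no substantial obstacle here; as the surrounding text indicates, the result is an immediate consequence of Lemma~\ref{lem:cs}. The only points requiring care are the two standard facts being used---norm monotonicity on the positive cone and the $C^*$ identity---both of which hold in $\mathcal{L}(\hik)$, and the correct identification of $\langle B,A\rangle$ as the adjoint of $\langle A,B\rangle$ so that the left-hand side genuinely takes the form $XX^*$.
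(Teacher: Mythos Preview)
Your proof is correct and is exactly the ``immediate consequence'' the paper intends: the paper gives no separate argument for this lemma, simply noting it follows from Lemma~\ref{lem:cs}, and your derivation---norm monotonicity on the positive cone, scalar pull-out, and the $C^*$ identity applied via $\langle B,A\rangle=\langle A,B\rangle^*$---is precisely how one reads that off.
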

We now have the following result.
\begin{lemma}\label{th:uncertain}
If $A,B\in \lh$ satisfy $[A,B]=0$, then
\begin{multline}\label{eq:incomprehensible_inequality}
\Vert
 [\Lambda(A),\Lambda(B)]
\Vert
\leq
\Vert 
\Lambda(A^*A)-\Lambda(A)^*\Lambda(A)\Vert^{1/2}
\Vert
\Lambda(BB^*)-\Lambda(B)\Lambda(B)^*\Vert^{1/2}
\\
+
\Vert 
\Lambda(AA^*)-\Lambda(A)\Lambda(A)^*\Vert^{1/2}
\Vert
\Lambda(B^*B)-\Lambda(B)^*\Lambda(B)\Vert^{1/2}.
\end{multline}  
\end{lemma}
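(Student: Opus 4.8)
The plan is to write the commutator $[\Lambda(A),\Lambda(B)]$ as a difference of two operator-valued inner products and then apply Lemma~\ref{normCS} to each piece. First I would note the algebraic identity
\begin{equation*}
[\Lambda(A),\Lambda(B)] = \Lambda(A)\Lambda(B) - \Lambda(B)\Lambda(A)
= \bigl(\Lambda(AB)-\langle A^*,B\rangle\bigr) - \bigl(\Lambda(BA)-\langle B^*,A\rangle\bigr),
\end{equation*}
where I have used $\langle A^*,B\rangle = \Lambda(AB) - \Lambda(A)\Lambda(B)$ and $\langle B^*,A\rangle = \Lambda(BA)-\Lambda(B)\Lambda(A)$ (taking the adjoints $A^*,B^*$ as the left arguments). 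Since $[A,B]=0$ we have $AB = BA$, hence $\Lambda(AB)=\Lambda(BA)$, and the identity collapses to
\begin{equation*}
[\Lambda(A),\Lambda(B)] = \langle B^*,A\rangle - \langle A^*,B\rangle.
\end{equation*}

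Next I would take norms and use the triangle inequality to get $\Vert[\Lambda(A),\Lambda(B)]\Vert \le \Vert\langle B^*,A\rangle\Vert + \Vert\langle A^*,B\rangle\Vert$. Then Lemma~\ref{normCS}, applied once with the pair $(A^*,B)$ reversed appropriately and once with $(B^*,A)$, gives
\begin{equation*}
\Vert\langle A^*,B\rangle\Vert \le \Vert\langle B,B\rangle\Vert^{1/2}\,\Vert\langle A^*,A^*\rangle\Vert^{1/2},
\qquad
\Vert\langle B^*,A\rangle\Vert \le \Vert\langle A,A\rangle\Vert^{1/2}\,\Vert\langle B^*,B^*\rangle\Vert^{1/2}.
\end{equation*}
Finally I would unpack the four diagonal terms using the definition $\langle X,Y\rangle = \Lambda(X^*Y)-\Lambda(X^*)\Lambda(Y)$: one finds $\langle A^*,A^*\rangle = \Lambda(AA^*)-\Lambda(A)\Lambda(A)^*$, $\langle A,A\rangle = \Lambda(A^*A)-\Lambda(A)^*\Lambda(A)$, and similarly for $B$. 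Matching these against the right-hand side of \eqref{eq:incomprehensible_inequality} completes the argument.

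The only subtlety — and the step to be careful about — is bookkeeping the adjoints consistently: the inner product $\langle\cdot,\cdot\rangle$ is conjugate-linear in the first slot, so one must decide which of $A,A^*$ plays the left role in each application of Lemma~\ref{normCS} so that the diagonal terms come out as $\Lambda(A^*A)-\Lambda(A)^*\Lambda(A)$ versus $\Lambda(AA^*)-\Lambda(A)\Lambda(A)^*$ exactly as written. Everything else is a direct substitution; no positivity or Stinespring machinery beyond what Lemma~\ref{lem:cs} and Lemma~\ref{normCS} already provide is needed, since the commutativity hypothesis $[A,B]=0$ is precisely what kills the $\Lambda(AB)$ and $\Lambda(BA)$ terms and leaves only the "covariance-defect" pieces $\langle\cdot,\cdot\rangle$ that the Cauchy–Schwarz bound controls.
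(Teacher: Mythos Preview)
Your proposal is correct and follows essentially the same approach as the paper: you derive the identity $[\Lambda(A),\Lambda(B)] = \langle B^*,A\rangle - \langle A^*,B\rangle$ from $[A,B]=0$, then bound each term via Lemma~\ref{normCS}. The paper's proof is simply a terser version of exactly this argument, and your unpacking of the diagonal terms $\langle A,A\rangle$, $\langle A^*,A^*\rangle$, $\langle B,B\rangle$, $\langle B^*,B^*\rangle$ is correct and matches the right-hand side of \eqref{eq:incomprehensible_inequality}.
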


\begin{proof}
Appealing again to the sesquilinear mapping $\langle A, B\rangle$, we write
\begin{align*}
\langle A^\ast , B\rangle 
=\Lambda(AB)-\Lambda(A)\Lambda(B),\ 
\langle B^*, A\rangle 
=\Lambda(BA)-\Lambda(B)\Lambda(A).
\end{align*}
Thus for $A,B\in \lh$ satisfying $[A,B]=0$, 
it holds that 
\begin{align*}
[\Lambda(A),\Lambda(B)]=\langle B^*,A\rangle -\langle A^*,B\rangle.
\end{align*}
By Lemma \ref{normCS}, the result is proved.
\end{proof}

We are now ready to prove Theorem \ref{th:tradeoff}.
\begin{proof}
Define $\epsilon:=\Gamma(E)-A$; we wish to estimate $\| \epsilon \| = D(\Gamma(E),A)$. 
Now we have $[A,N_{\Sy}]=[\Gamma(E),N_{\Sy}]+[N_{\Sy} ,\epsilon]$ and thus 
\begin{eqnarray}
\Vert [A,N_{\Sy}]\Vert \leq \Vert [\Gamma(E),N_{\Sy}]\Vert +\Vert [N_{\Sy}, \epsilon]\Vert. 
\label{ANS}
\end{eqnarray}
We provide a bound for each term. 
The second term of the right-hand side is 
easily bounded as
$\Vert [N_{\Sy}, \epsilon] \Vert \leq 2 \Vert N_{\Sy} \Vert \Vert \epsilon \Vert$. 

By assumption we have $[E,N]=0$, and therefore we may apply Lemma \ref{th:uncertain} in order to bound the first term on the right hand side of (\ref{ANS}). Note that $\Gamma(N)=N_{\Sy}+\omega_{\R}(N_{\R})\id$. 
We thus obtain 
\begin{align}
\bigl\Vert [\Gamma(E),N_{\Sy}]\bigr\Vert 
&=\bigl\Vert [\Gamma(E),\Gamma(N)]\bigr\Vert \nonumber \\
&\leq
2 \bigl\Vert \Gamma(E^2)-\Gamma(E)^2\bigr\Vert^{1/2}
\bigl\Vert \Gamma(N^2)-\Gamma(N)^2\bigr\Vert^{1/2}
\nonumber \\
&=
2 \bigl\Vert \Gamma(E^2)-\Gamma(E)^2\bigr\Vert^{1/2}
\bigl(\omega_{\R}(N_{\R}^2)-\omega(N_{\R})^2\bigr)^{1/2}. 
\label{AA}
\end{align}
We now bound $\Vert \Gamma(E^2)-\Gamma(E)^2\Vert$. 
Using $\Gamma(E^2)-\Gamma(E)^2\geq 0$ (two-positivity) 
and $E^2 \leq E$, we obtain 
\begin{align*}
\Gamma(E^2)-\Gamma(E)^2 &\leq \Gamma(E)-\Gamma(E)^2
\\
&=(\epsilon+A)-(\epsilon+A)^2
\\
&=\epsilon-\epsilon^2 -\epsilon A- A \epsilon + A-A^2.
\end{align*}
Thus it holds that 
\begin{align*}
\Vert \Gamma(E^2)-\Gamma(E)^2\Vert
&\leq \Vert \epsilon-\epsilon^2 -\epsilon A- A \epsilon\Vert  + V(A)
\\
&=
\bigl\Vert [A,\epsilon] + (\id -2A-\epsilon)\epsilon\bigr\Vert +V(A)
\\
&\leq  
\bigl\Vert [A,\epsilon]\bigr\Vert +\bigl\Vert \bigl(\id-(A+\Gamma(E))\bigr)\epsilon\bigr\Vert +V(A) . 
\end{align*}
$0\leq A \leq \id$ gives $\bigl\Vert [A,\epsilon]\bigr\Vert \leq \Vert \epsilon\Vert$. 
In fact, $\bigl\Vert [A, \epsilon] \bigr\Vert 
= \sup_{\Vert \phi \Vert =1} 
| \ip{ \phi}{ i [A, \epsilon]| \phi} |$ 
can be bounded by Robertson's uncertainty relation as 
$|\ip{\phi}{ i [A, \epsilon]| \phi} | \leq 
2 \sqrt{  \ip{ \phi} {A^2| \phi}
- \ip{ \phi}{ A| \phi}^2} 
\Vert \epsilon \Vert$ with 
$\sqrt{ \ip{ \phi}{A^2| \phi} - 
\ip{ \phi}{A| \phi}^2} \leq \frac{1}{2}$,  
%
and 
$0\leq A + \Gamma(E)\leq 2\id$ gives 
$\bigl\Vert \id -(A+\Gamma(E))\bigr\Vert \leq 1$. Thus we obtain 
\begin{align}
\bigl\Vert \Gamma(E^2)-\Gamma(E)^2\bigr\Vert 
\leq 
2\Vert \epsilon \Vert +V(A). 
\label{GG}
\end{align}
Putting (\ref{GG}) in (\ref{AA}), we obtain 
\begin{align}
\Vert [\Gamma(E),N_{\Sy}]\Vert 
\leq 
2 (2\Vert \epsilon \Vert +V(A))^{1/2} 
(\omega_{\R}(N_{\R}^2)-\omega(N_{\R})^2)^{1/2}. 
\label{DF}
\end{align}
Thus, (\ref{ANS}) can be bounded as
\begin{align*}
\bigl\Vert [A,N_{\Sy}]\bigr\Vert
\leq 
2\Vert N_{\Sy}\Vert \Vert \epsilon\Vert 
+
2 \bigl(2\Vert \epsilon \Vert +V(A)\bigr)^{1/2} 
\bigl(\omega_{\R}(N_{\R}^2)-\omega(N_{\R})^2\bigr)^{1/2}. 
\end{align*}
This completes the proof.
\end{proof}
The following corollaries are immediate consequences. 
\begin{corollary} 
It holds that 
\begin{align*}
\bigl\Vert [A, N_{\Sy}]\bigr\Vert 
\leq 2 D(\Gamma(E), A) \Vert N_{\Sy}\Vert   
+2 \Vert N_{\R}\Vert \bigl(2D(\Gamma(E),A)+V(A)\bigr)^{1/2}.
\end{align*}
\end{corollary}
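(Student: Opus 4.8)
The plan is to derive this corollary directly from Theorem~\ref{th:tradeoff} by replacing the quantity $\left(\omega_{\R}(N_{\R}^2)-\omega_{\R}(N_{\R})^2\right)^{1/2}$ appearing there with the (generally larger) bound $\Vert N_{\R}\Vert$. First I would recall that the variance of any self-adjoint operator in any state cannot exceed the square of its operator norm: for a state $\omega_{\R}$ we have $\omega_{\R}(N_{\R}^2) \leq \Vert N_{\R}\Vert^2$ since $N_{\R}^2 \leq \Vert N_{\R}\Vert^2\,\id$, and $\omega_{\R}(N_{\R})^2 \geq 0$, so
\begin{align*}
\left(\omega_{\R}(N_{\R}^2)-\omega_{\R}(N_{\R})^2\right)^{1/2} \leq \left(\omega_{\R}(N_{\R}^2)\right)^{1/2} \leq \Vert N_{\R}\Vert.
\end{align*}
(A slightly sharper route, giving the same final bound up to a constant, notes that $\Delta_{\omega_{\R}}N_{\R} \leq \tfrac12(\lambda_{\max}-\lambda_{\min})$ where $\lambda_{\max},\lambda_{\min}$ are the extreme eigenvalues of $N_{\R}$, but the crude estimate suffices here.)

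Second, I would substitute this inequality into the second term on the right-hand side of the bound in Theorem~\ref{th:tradeoff}. Since the factor $\bigl(2D(\Gamma(E),A)+V(A)\bigr)^{1/2}$ is nonnegative, monotonicity of multiplication by a nonnegative scalar preserves the inequality, and we obtain
\begin{align*}
2 \left(\omega_{\R}(N_{\R}^2)-\omega_{\R}(N_{\R})^2\right)^{1/2}\bigl(2D(\Gamma(E),A)+V(A)\bigr)^{1/2} \leq 2 \Vert N_{\R}\Vert \bigl(2D(\Gamma(E),A)+V(A)\bigr)^{1/2}.
\end{align*}
Combining this with the unchanged first term $2D(\Gamma(E),A)\Vert N_{\Sy}\Vert$ yields exactly the claimed inequality
\begin{align*}
\bigl\Vert [A, N_{\Sy}]\bigr\Vert \leq 2 D(\Gamma(E), A) \Vert N_{\Sy}\Vert + 2 \Vert N_{\R}\Vert \bigl(2D(\Gamma(E),A)+V(A)\bigr)^{1/2}.
\end{align*}

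There is essentially no obstacle here: the corollary is a one-line weakening of the main theorem, trading the state-dependent spread $\Delta_{\omega_{\R}}N_{\R}$ for the state-independent operator norm $\Vert N_{\R}\Vert$. The only thing worth a brief remark is the interpretation: whereas Theorem~\ref{th:tradeoff} says good approximation forces a large \emph{spread} in the reference's number operator (in the particular state $\omega_{\R}$), this corollary says it forces a large $\Vert N_{\R}\Vert$, which in the number-operator setting is governed by $\dim\hir$ — so good approximation requires a large reference Hilbert space, recovering in model-independent form the dimension bound obtained earlier via Lemma~\ref{lem:owe}. I would state the proof as simply ``Immediate from Theorem~\ref{th:tradeoff} and the estimate $\Delta_{\omega_{\R}}N_{\R}\leq\Vert N_{\R}\Vert$,'' and leave it at that.
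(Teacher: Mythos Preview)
Your proposal is correct and matches the paper's own treatment: the paper simply states that this corollary is an ``immediate consequence'' of Theorem~\ref{th:tradeoff}, and your one-line derivation via $\Delta_{\omega_{\R}}N_{\R}\leq\Vert N_{\R}\Vert$ is exactly the intended step.
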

\begin{corollary}\label{cor:sharp}
For a sharp effect (projection) $A$ it holds that 
\begin{align*}
\bigl\Vert [A, N_{\Sy}]\bigr\Vert 
\leq 2 D(\Gamma(E), A) \Vert N_{\Sy}\Vert 
+2 \sqrt{2} \bigl(\omega_{\R}(N_{\R}^2) -\omega_{\R}( N_{\R})^2\bigr)^{1/2}\,
D(\Gamma(E),A)^{1/2}, 
\end{align*}
\end{corollary}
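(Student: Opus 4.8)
The plan is to derive Corollary~\ref{cor:sharp} directly from Theorem~\ref{th:tradeoff} by specialising to the case where $A$ is a projection. The key observation is that for a sharp effect, $A^2 = A$, so the unsharpness $V(A) = \Vert A - A^2 \Vert = 0$. The main inequality of Theorem~\ref{th:tradeoff} then reads
\begin{align*}
\bigl\Vert [A, N_\Sy]\bigr\Vert
\leq
2 D\bigl(\Gamma(E), A\bigr) \Vert N_\Sy\Vert
+2 \bigl(\omega_{\R}(N_{\R}^2)-\omega_{\R}(N_{\R})^2\bigr)^{1/2}
\bigl(2D(\Gamma(E),A)\bigr)^{1/2},
\end{align*}
so it only remains to simplify the factor $\bigl(2D(\Gamma(E),A)\bigr)^{1/2} = \sqrt{2}\,D(\Gamma(E),A)^{1/2}$. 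Pulling the $\sqrt2$ out of the square root and combining it with the leading $2$ gives the stated coefficient $2\sqrt2$, and the claim follows immediately.

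First I would state that since $A$ is a projection, $V(A)=0$. Second I would substitute $V(A)=0$ into the conclusion of Theorem~\ref{th:tradeoff}. Third I would rewrite $(2D(\Gamma(E),A))^{1/2}$ as $\sqrt{2}\,D(\Gamma(E),A)^{1/2}$ and collect constants, yielding exactly the displayed inequality. No additional machinery beyond Theorem~\ref{th:tradeoff} is required.

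There is essentially no obstacle here: the corollary is a one-line specialisation of the theorem, and the only thing to be careful about is the bookkeeping of the numerical constant (ensuring that $2\cdot\sqrt{2} = 2\sqrt2$ rather than, say, $\sqrt2$ or $4$). I would therefore keep the proof to a single sentence, perhaps phrased as: ``Since $A$ is a projection, $V(A) = \Vert A - A^2\Vert = 0$, and substituting this into Theorem~\ref{th:tradeoff} gives the claim.''
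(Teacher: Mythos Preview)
Your proposal is correct and matches the paper's own treatment: the paper states that both corollaries are ``immediate consequences'' of Theorem~\ref{th:tradeoff}, and your one-line specialisation (setting $V(A)=0$ for a projection and extracting the factor $\sqrt{2}$) is exactly that immediate consequence.
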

Thus to attain a good measurement a large reference frame is required, 
since it is only then that $\omega_{\R}(N_{\R}^2) -\omega_{\R}( N_{\R})^2$ can be large.

\begin{example}\emph{ ($G= U(1)$).
In section \ref{sec:ppo} we studied the particular map $\Y$ for 
$G=U(1)$ and showed that a large reference frame is needed for 
$\Y(A)$ to well approximate the effect $A = \frac{1}{2}
(|0\rangle \langle 0 | + |1\rangle \langle 1| 
+ |0\rangle \langle 1| + |1\rangle \langle 0|)$. 
Below we estimate the lower bound of the distance 
$D(\Gamma (E), A)$ for an arbitrary 
invariant $E$. 
Since $A$ acts only on a subspace spanned by $\{|0\rangle, 
|1\rangle\}$, we introduce a projection operator $P= |0\rangle \langle 
0 | + |1\rangle \langle 1|$. 
We first note that for an arbitrary invariant $E$, we have
\begin{eqnarray*}
D\bigl(\Gamma ((P\otimes \id) E(P\otimes \id)), A\bigr)
\leq D\bigl(\Gamma (E), A\bigr).
\end{eqnarray*} 
In fact, since $\Gamma((P\otimes \id) E(P\otimes \id))
=P \Gamma(E) P$ holds,\footnote{
This follows from the footnote in Section \ref{sec:res}. 
}
$D\bigl(\Gamma((P\otimes \id) E (P\otimes \id)), A\bigr)
= \bigl\Vert P( \Gamma(E) - A) P \bigr\Vert \leq D(\Gamma(E), A)$ follows. 
Thus it suffices to consider only $E$ satisfying 
$(P\otimes \id) E (P\otimes \id) = E$. 
For such an $E$, $\epsilon:= \Gamma(E) -A$ satisfies 
$[N_{\Sy}, \epsilon] = [PN_{\Sy} P, \epsilon]$. 
Thus $\bigl\Vert [N_{\Sy}, \epsilon]\bigr\Vert $ in (\ref{ANS}) can be 
bounded as 
$\bigl\Vert [N_{\Sy}, \epsilon]\bigr\Vert \leq \Vert \epsilon\Vert$ 
(where positivity of $N_{\Sy}$ is used). 
Compared to the direct application of Corollary \ref{cor:sharp}, this gives the tighter bound
\begin{eqnarray*}
\frac{1}{2} \leq D(\Gamma_{\omega_{\R}}(E), A)
+ 
2 \sqrt{2} (\omega_{\R}(N_{\R}^2) - \omega_{\R}(N_{\R})^2)^{1/2}
D(\Gamma_{\omega_{\R}}(E), A)^{1/2},   
\end{eqnarray*}
which shows a trade-off relation between 
the accuracy and the size of the reference frame.} 
\end{example}



\noindent{\sl Discussion.} The necessity of the large reference frame, in the sense of large fluctuation $\omega_{\R}(N_{\R}^2 ) -\omega_{\R}(N_{\R})^2$, can be 
interpreted in terms of the uncertainty relation for joint measurability 
\cite{miuncertainty}. 
$E$ commutes with $N:=N_{\Sy} + N_{\R}$, and one may consider
joint measurements of $E$ and $N$ where these observables may be regarded as approximators 
to $A$ and $N_\Sy$, respectively.
The aim is to obtain outcome distributions of $E$ 
that are close to those of $A$. Since $A$ and $N_{\Sy}$ do not commute, 
this closeness entails, due to the joint measurement uncertainty relation, that the outcome distribution of $N$ should contain less
``information" on $N_{\Sy}$. In other words, the (hypothetical) 
measurement of $N$ must be a highly inaccurate measurement of 
$N_\Sy$. This large approximation error can be achieved only by the 
initial uncertainty of $N_{\R}$. Put another way, for $A$ to be a good 
``absolute" representative of $E$ in the context of a joint measurement of $E$ and $N$, by the uncertainty relation for $A$ and $N_{\Sy}$, $N_{\Sy}$ cannot well approximate $N$,
and this discrepancy between $N_{\Sy}$ and $N$ is afforded by a large spread in $N_{\R}$.

We also note that symmetry constraints of the form we have considered are relevant in the framework of resource theories (e.g., \cite{ms1}, \cite{ajr1}), wherein a reference state $\omega_{\R}$ is viewed as a resource if it is not invariant under $U(1)$  symmetry transformations. The asymmetric state is seen in \cite{ms1} as allowing for the simulation of non-invariant statistics of $\Sy$ under the constraint of symmetric operations of $\Sy + \R$, and the degree of asymmetry may be quantified and exploited in the Wigner-Araki-Yanase theorem \cite{ajr1}. The connection between asymmetry and localisation, as we have it, is clear: highly localised reference states, with respect to a phase conjugate to number, are highly asymmetric under phase shifts. The exact equivalence, should there be one, between localisation as allowing non-invariant quantities to represent relative ones on the one hand, and asymmetry as a resource for good measurements in Wigner-Araki-Yanase-type scenarios on the other, remains a task to be investigated. 

\section{Conclusion}\label{sec:con}
Through model considerations and generic trade-off relations we have shown
that the possibility of the traditional tacit reduction of relative quantities to ``absolute" quantities
is contingent upon the size of the reference system and a judicious choice of reference 
(system) state. An obstruction to complete specification of subsystem statistics, or rather to subsystem quantities
being used to approximate the corresponding invariant quantities, arises
due to the incompatibility of subsystem quantities. 

The necessity of a large uncertainty in the symmetry generator in a given reference state for good approximation is an essentially quantum feature. In classical mechanics, all observables commute and all (pure) states are well localised
with respect to all classical quantities. Provided that the classical reference system admits a faithful action of the symmetry group, this reference system is sufficient for the ``absolute"
quantities to perfectly represent the relative ones, with no constraint on the values of other quantities at all.

The analysis presented here constitutes a first step towards a comprehensive, operational understanding of the role of (quantum) reference systems in the description of quantum experiments, and suggests a number of further avenues of exploration, for example the physically relevant case of non-Abelian symmetries, approximation of invariant observables by ``absolute" phase-space quantities, strength of Bell inequality violation for finite-size reference systems, and many more.

\bigskip
\noindent{\bf Acknowledgements} Many thanks to Tom Bullock for a careful reading of an earlier version of this manuscript.
TM acknowledges JSPS KAKENHI (grant no. 15K04998). LL acknowledges support under the grant \emph{Quantum Mathematics and Computation} (no.  EP/K015478/1).

\section*{References}


\begin{thebibliography}{99}

\bibitem{Wigner} E. P.~Wigner, 1967 {\it Symmetries and Reflections} (Bloomington and London: Indiana University Press)

\bibitem{Weyl} H.~Weyl 1931. The Theory of Groups and Quantum Mechanics,  Reprint, Dover, 1950.

\bibitem{brs}S.~D.~Bartlett, T.~Rudolph and R.~W.~Spekkens, 2007. 
Reference frames, superselection rules, and quantum information, 
Rev. Mod. Phys. {\bf 79}, 555. 


\bibitem{lbm1} L.~Loveridge, P.~Busch and T.~Miyadera, 2016. Quantum measurements and symmetry, \emph{in preparation}.

\bibitem{Page}D.~N.~Page, W.~K.~Wootters, 1983.
Evolution without evolution: Dynamics described by stationary observables, 
Phys.~Rev.~D {\bf 27}, 2885. 

\bibitem{Milburn}G.~J.~Milburn, D.~Poulin, 2006. 
Relational time for systems of oscillators, Int. J. Quantum Inform., 
{\bf 04}, 151.  

\bibitem{Brunetti}R.~Brunetti, K.~Fredenhagen, M.~Hoge, 2010. 
Time in quantum physics: from an external parameter to an 
intrinsic observable, 
Found.~Phys. {\bf 40}, 1368.


\bibitem{Giovannetti}V.~Giovannetti, S.~Lloyd, L.~Maccone, 
2015. Quantum time, Phys.~Rev.~D {\bf 92}, 045033. 


\bibitem{wig1} E.~Wigner, 1952. Die Messung quantenmechanischer Operatoren, Z. Phys. {\bf 133}, 101.

\bibitem{buschtrans} E. P.~Wigner, 1952. Measurement of quantum mechanical operators.
Translation of [5] by P.~Busch; http://arxiv.org/abs/1012.4372.


\bibitem{ay1} H.~Araki and M.M.~Yanase, 1960. Measurement of quantum mechanical operators, Phys. Rev. {\bf 120}, 622.

\bibitem{yan1} M. M.~Yanase, 1961. Optimal measuring apparatus, Phys. Rev. 
{\bf 123}, 666.

\bibitem{mi1} T.~Miyadera and H.~Imai, 2006. 
Wigner-Araki-Yanase theorem on distinguishability, 
Phys. Rev. A {\bf 74}, 024101.

\bibitem{lb1} L.~Loveridge and P.~Busch 2011. ``Measurement of quantum mechanical operators" revisited, Eur. Phys. J. D. {\bf 62}, 2.

\bibitem{lb2} P.~Busch and L.~Loveridge 2011. Position measurements obeying momentum conservation, Phys. Rev. Lett. {\bf 106}, 110406.

\bibitem{ms1} I.~Marvian and R. W.~Spekkens, 2012. An information-theoretic account of the Wigner-Araki-Yanase theorem, arXiv:1212.3378. 


\bibitem{www} G. C.~Wick, A. S.~Wightman, and E. P.~Wigner, 1952. Intrinsic parity of elementary particles, Phys. Rev. {\bf 88}, 101.

\bibitem{as} Y.~Aharonov and L.~Susskind, 1967. Charge superselection rule, Phys. Rev. {\bf 155}, 1428.

\bibitem{buschunsharp}
P.~Busch, 2009. On the sharpness and bias of quantum effects, 
Found.~Phys. {\bf 39}, 712-730.

\bibitem{miuncertainty}
T.~Miyadera and H.~Imai, 2008. Heisenberg's uncertainty principle for simultaneous measurement of positive-operator-valued measures, 
Phys.~Rev.~A {\bf 78}, 052119. 

\bibitem{polterovich}
L.~Polterovich, 2014. Symplectic geometry of quantum noise, 
Commun.~Math.~Phys. {\bf 327}, 481-519. 

\bibitem{Davies}
E.~B.~Davies, 1976. {\em Quantum Theory of Open Systems}, Academic Press.  


\bibitem{QMMT}
P.~Busch, P.~Lahti, J.-P.~Pellonp\"a\"a and K.~Ylinen, {\em Quantum Measurement},
Springer (forthcoming).

\bibitem{numphasecomplem}
P.~Busch, P.~Lahti, J.-P.~Pellonp\"a\"a and K.~Ylinen, 2001. Are number and phase complementary observables?,
J.~Phys.~A {\bf 34}, 5923-5935,

\bibitem{Fuchs}
C.~A.~Fuchs, C.~M.~Caves, 1994. 
Ensemble-dependent bounds for accessible information in quantum mechanics, 
Phys.~Rev.~Lett. {\bf 73}, 3047.

\bibitem{Nielsen}
 M.~Nielsen, I.~Chuang, 2000. 
{\em Quantum Computation and Quantum Information}, Cambridge University Press.

\bibitem{Jozsa}H.~Barnum, C.~M.~Caves, 
C.~A.~Fuchs, R.~Jozsa, B.~Schumacher, 
1996. 
Noncommuting mixed states cannot be broadcast,
Phys.~Rev.~Lett. {\bf 76}, 2818. 

\bibitem{Mandelstam}
L.~I.~Mandelshtam, I.~E.~Tamm, 
1945. The uncertainty relation between energy and time 
 in nonrelativistic quantum mechanics, 
J.~Phys.~U.S.S.R. {\bf 9}, 249. 

\bibitem{BuschTime}
P.~Busch, 2008. 
The time-energy uncertainty relation, 
in {\em Time in Quantum Physics}, Springer. 

\bibitem{MiTime}
T.~Miyadera, 2015. 
Energy-time uncertainty relations in quantum measurements, 
arXiv:1505.03707.

\bibitem{BuschOperational}
P.~Busch, M.~Grabowki, P.~Lahti, 1997. {\em Operational Quantum Physics}, 
Springer, 2nd ed.

\bibitem{miJMP}
T.~Miyadera, 2011.
Uncertainty relations for joint localizability and joint measurability in 
finite-dimensional systems, 
J.~Math.~Phys. {\bf 52}, 072105. 

\bibitem{Ja06} B.~Janssens, 2006. Unifying decoherence and the Heisenberg Principle, arXiv:quant-ph/0606093.

\bibitem{ajr1} M.~Ahmadi, D.~Jennings and T.~Rudolph, 2013. The Wigner-Araki-Yanase theorem and the quantum resource theory of asymmetry, New J. Phys. {\bf 15}, 013057.



\end{thebibliography}
\end{document}